\newcommand{\NP}{\ensuremath{\mathcal{NP}}}
\begin{document}

\titlerunning{1-visibility }

\title {1-Visibility Representations of 1-Planar Graphs}

\author{Franz J. Brandenburg
\thanks{This work is supported by the German Science Foundation, DFG, Grant
 Br-835/18-1}}
  \institute{University of Passau, 94030 Passau, Germany \\
  \email{brandenb@informatik.uni-passau.de}}

\maketitle

\begin{abstract}
A visibility representation is a classical drawing style of planar
graphs. It displays the vertices of a graph as  horizontal
vertex-segments, and each edge is represented by a vertical
edge-segment touching the segments of its end vertices; beyond that
segments do not intersect.

We generalize visibility to 1-visibility, where each edge- (vertex-)
segment crosses at most one vertex- (edge-) segment. In other words,
a vertex is crossed by at most one edge, and vice-versa. We show
that 1-visibility properly extends 1-planarity and develop a linear
time algorithm to compute a 1-visibility representation of an
embedded 1-planar graph on $\mathcal{O}(n^2)$ area. A graph is
1-planar if it can be drawn in the plane such that each edge is
crossed at most once.
 Concerning  density, both 1-visible and
1-planar graphs of size $n$ have at most $4n-8$ edges. However, for
every $n \geq 7$  there are 1-visible graphs with $4n-8$ edge which
are not 1-planar.

\end{abstract}

{\bf Keywords:} \, visibility representation; planar graphs;
1-planar graphs.

\section{Introduction}

 Drawing planar graphs is an important topic in graph theory,
 combinatorics, and in particular in graph drawing. The existence of
straight-line drawings was independently proved by Wagner
\cite{w-bv-36}, Steinitz and Rademacher \cite{sr-34}, Stein
\cite{s-cm-51} and F\'{a}ry \cite{fary-48}. The stunning results of
de Fraysseix, Pach and Pollack \cite{fpp-hdpgg-90} and Schnyder
\cite{S-epgg-90} show that planar graphs admit straight-line grid
drawings in quadratic area, which can be computed in linear time.

A visibility representation is another way to draw a planar graph.
Here the vertices are drawn as horizontal bars, called
\emph{vertex-segments}, and two vertex-segments must see each other
along a vertical line, called \emph{edge-segment}, if there is an
edge between the respective vertices. Vertex- and edge-segments do
not overlap except that an edge-segment begins and ends at the
vertex-segments of its end vertices. For convenience, we identify
vertices and edges with their segments. Otten and van Wyck
\cite{ow-grild-78} showed that every planar graph has a visibility
representation, and a linear time algorithm for constructing it was
given independently by Rosenstiehl and Tarjan \cite{rt-rplbopg-86}
and by Tamassia and Tollis \cite{TT-vrpg-86}. Their algorithm uses a
grid of size at most $(2n-5) \times (n-1)$, which was gradually
improved to
 $(\lfloor 4n/3 \rfloor -2)  \times (n-1)$
 \cite{flly-wovrpg-07}.

 Visibility representations lead to clear pictures and have gained a
 lot of interest, see also \cite{dett-gdavg-99} and the references given there.
 For planar
graphs there are  three (main) versions of visibility: weak,
$\epsilon$, and strong.  They differ in the representation of the
segments. In the weak version the  vertex-segments may or may not
include their extremes, such that an edge-segment  may pass the open
end of a third unrelated one. Two vertices must see each other if
they are adjacent, but not conversely. Hence, weak visibility
preserves the \emph{subgraph property}, which says that every
subgraph of a weak visibility graph is a weak visibility graph. In
the $\epsilon$-version, the edge-segments are bands with thickness
$\epsilon
> 0$ and there is an edge if and only if the corresponding vertices
see each other.  Finally, in the strong version there is an edge if
and only if there is a visibility. The latter makes an essential
difference, since the $K_{2,3}$ has no strong visibility
representation. Moreover, it is \NP-hard to determine whether a
3-connected planar graph has a strong visibility representation
\cite{a-rvg-92}, whereas weak visibility is equivalent to planarity
and thus testable in linear time. Weak and $\epsilon$-visibility
coincide on 2-connected planar graphs, and every 4-connected planar
graph has a strong visibility representation \cite{TT-vrpg-86}.

There are several attempts to generalize the planar graphs to nearly
planar graphs, e.g., via forbidden minors \cite{rs-gmV-86}, surfaces
of higher genus, or various restrictions on crossings, such as
$k$-planar \cite{pt-gdfce-97}, almost planar \cite{FPS-nekqpg-13} or
right angle crossing (RAC) graphs \cite{del-dgrac-11}. Here, we
consider 1-planar graphs, which are defined by drawings in the plane
such that each edge is crossed at most once. 1-planar graphs were
introduced by Ringel \cite{ringel-65} and occur when a planar graph
and its dual are drawn simultaneously \cite{ek-sepgd-05}. As an
example consider the complete graph $K_6$, which can be drawn
1-planar with two nested triangles and straight-line edges.

The straight-line or rectilinear drawability of 1-planar graphs was
first investigated by Eggleton \cite{e-rdg-86}. He settled this
problem for outer 1-planar graphs and proved that every outer
1-planar graph has a straight-line drawing. In outer 1-planar graphs
all vertices are in the outer face and each edge is crossed at most
once. Thomassen \cite{t-rdg-88} generalized this result and proved
that an embedded 1-planar graph has a straight-line  drawing if and
only if it excludes B- and W-configurations, see Figs.
\ref{fig:B-configuration} and \ref{fig:W-configuration}. Then only
X-configurations remain for pairs of crossing edges, see Fig.
\ref{fig:X-configuration}. The forbidden configurations were
rediscovered by Hong et al. \cite{help-ft1pg-12}, who also showed
that there is a linear time algorithm to convert a 1-planar
embedding without these forbidden configurations into a
straight-line drawing. In fact, 1-planar graphs can be drawn as nice
as planar graphs. Alam et al. \cite{abk-sld3c-13} proved that every
3-connected 1-planar graph has an embedding with at most one
W-configuration in the outer face, and has a straight-line grid
drawing in quadratic area with the exception of a single edge in the
outer face. Such  drawings can be computed in linear time from a
given 1-planar embedding as a witness for 1-planarity. Here we add
visibility representations.

\begin{figure}
  \centering
  \subfigure[B-configuration]{
      \parbox[b]{3.0cm}{
      \centering
        \includegraphics[scale=0.4]{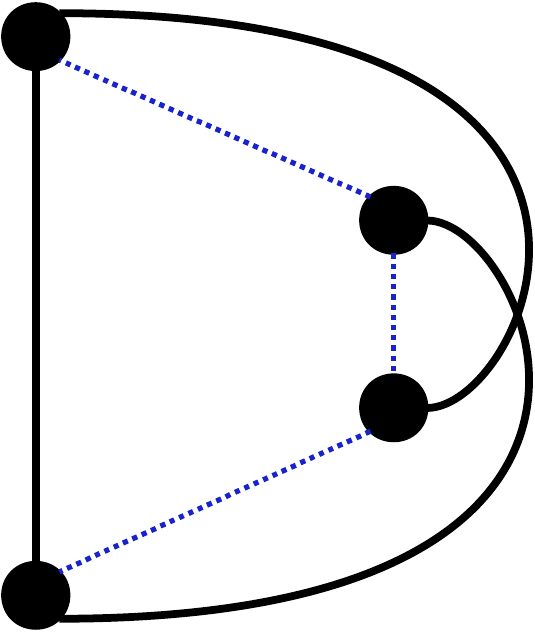}
        \label{fig:B-configuration}
      }
  }\hfil
  \subfigure[W-configuration]{
    \parbox[b]{3.0 cm}{
    \centering
    \includegraphics[scale=0.42]{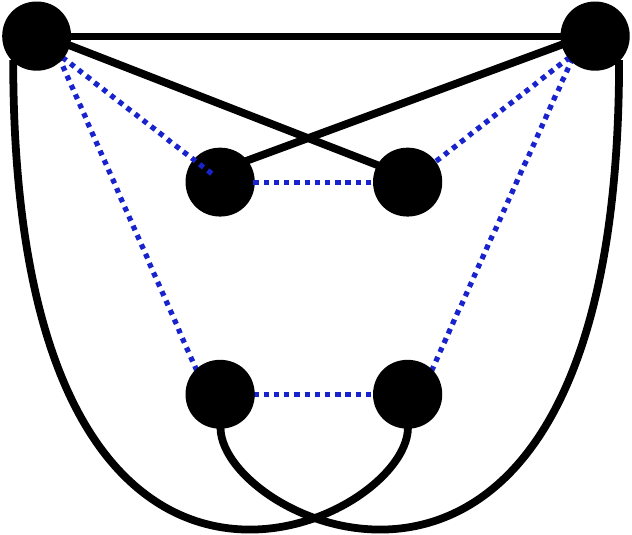}
    \label{fig:W-configuration}
  }
  }
\hfil
  \subfigure[X-configuration]{
  \parbox[b]{3.0cm}{
  \centering
    \includegraphics[scale=0.43]{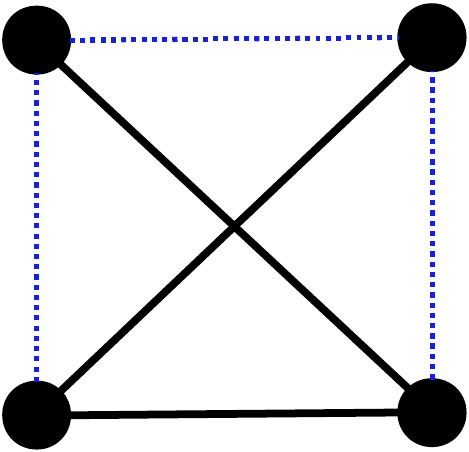}
    \label{fig:X-configuration}
  }
  }
  \caption{Augmented B-, W- and X-configurations, where the augmentation
  is drawn with blue dotted lines.
  \label{fig:BWXconfig}}
\end{figure}

There is a close relationship  between  1-planar graphs and right
angle crossing (RAC) graphs, where edges must be straight-line and
cross at a right angle \cite{del-dgrac-11}. 1-planar graphs and RAC
graphs have almost the same density, i.e., the maximal number of
edges for graphs of size $n$, namely $4n-8$ and $4n-10$. Eades and
Liotta \cite{el-racg1p-13} proved that every maximally dense RAC
graph is 1-planar. Conversely, every outer 1-planar graph has a RAC
drawing with the same embedding \cite{de-eogracd-12}. Hence, the RAC
graphs range between the outer 1-planar and the 1-planar graphs. In
fact,
outer 1-planar graphs are planar \cite{abbghnr-ro1pglt-13}.\\

Visibility representations have variously been generalized to two
dimensions
  with vertices  as non-overlapping paraxial rectangles
  and edges represented by  horizontal and  vertical
visibility. In the rectangle visibility approach
\cite{dh-rvrbg-94,hsv-orstt-95,hsv-rstg-99} horizontal and vertical
  edge-segments may cross and the resulting graphs have up
to $6n-20$ edges. Horizontal and vertical lines for edges were
allowed in   Biedl's flat visibility representation
\cite{b-sdogs-11}, however, the lines do not cross and the
horizontal lines are a shortcut for a local adjacency. Hence, the
concept is equivalent to weak visibility of planar graphs. The term
1- and 2-visibility was used by F\"{o}{\ss}meier et al.
\cite{fkk-2vdpg-96} for orthogonal drawings of planar graphs.

Dean et al. \cite{DEGLST-bkvg-07} introduced $k$-bar visibility,
where the vertices are represented as horizontal bars and bars are
allowed to see through at most  $k$ other bars. Thus $0$-bar
visibility is the common planar visibility, and  in $1$-visibility a
bar can be crossed by the visibility lines of many other bars. They
discussed the weak, $\epsilon$ and strong versions and showed that
$1$-bar visible graphs have at most $6n-20$ edges. In fact, the
formula for the density indicates that $k$-bar visible graphs are
related to $k$-quasi-planar graphs \cite{s-kqpg-12,FPS-nekqpg-13},
where no $k+2$ edges cross mutually. Recently,
 Sultana et al. \cite{srrt-b1vd1-13} showed that some special
 classes of graphs including the maximal outer 1-planar graphs
 are $1$-bar visible.

In this paper we generalize visibility representations such that
 they capture 1-planarity. The vertices are drawn as horizontal
 vertex-segments
 and an edge needs a vertical visibility and is represented by an edge-segment.
 Uncrossed segments
 are transparent and become impermeable if they are crossed by a segment of the other type.
  Hence, all crossings are right angle crossings (RAC) between a vertex
 and an edge, and each object is involved in at most one crossing.

 We show that every 1-planar graph has a 1-visibility
 representation in $\mathcal{O}(n^2)$ area, which can be computed in
 linear time from a given 1-planar embedding as a witness for
 1-planarity. This settles a conjecture of Sultana et al.
 \cite{srrt-b1vd1-13}.
 The algorithm uses the standard technique for visibility
 representations of planar graphs from \cite{dett-gdavg-99,rt-rplbopg-86, TT-vrpg-86}
 via the st-numbering
 of the graph and its dual, which operates on the planar skeleton without crossing edges.
  The given embedding is augmented  and transformed such
 that the 3-connected components have a
 normalized embedding \cite{abk-sld3c-13} and are separated by a
 copy of the edge between the separation pair. A local transformation
suffices to re-insert a pair  of crossing edges into the  face left
by their extraction. The
 3-connected components  are sandwiched between the horizontal
 vertex-segments of their separation pair, which comes directly from the st-numberings.

 1-visible graphs have the same maximal density as
 1-planar graphs with at most $4n-8$ edges for graphs of size $n$.
 This is readily seen, since a 1-visible graphs consists of a planar subgraph together
 with one crossing edge per vertex. Since the two
 outermost vertices are excluded the density reaches at most $4n-8$.
 So we provide a new and simple proof of the maximal density of
 1-planar graphs. The so-called  extended wheel graphs  $XQ_k$ \cite{bhw-1og-84} are examples
 of 1-planar graphs with maximal density.
The $XQ_8$ graph is shown in Fig. \ref{XQ8}, where the visibility
representation is obtained by our algorithm.
\begin{figure}
  \centering
  \subfigure {
      \includegraphics[scale=0.42]{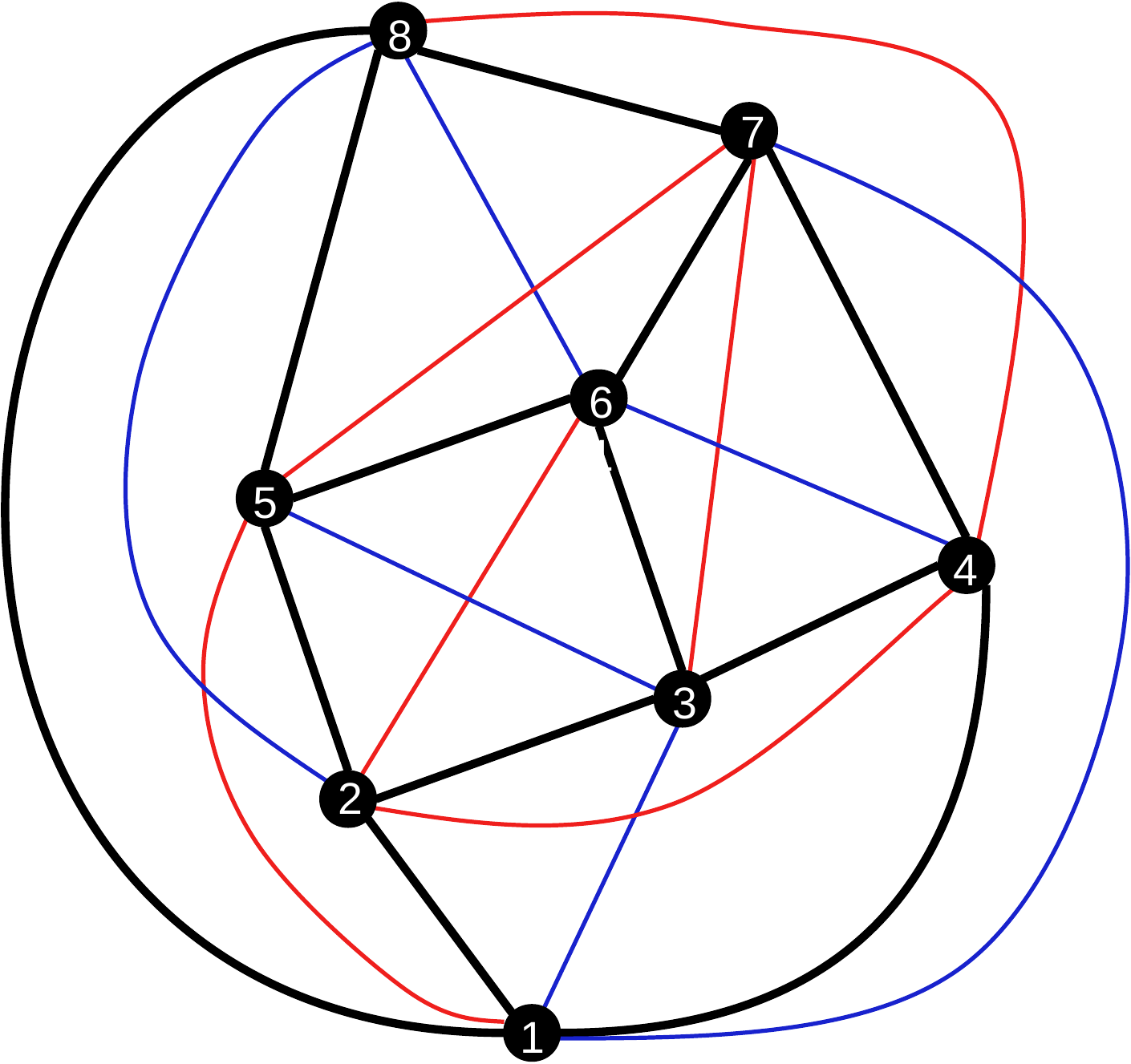}
      \label{fig:XQ8cross}
  }\quad\quad
  \subfigure {
    \includegraphics[scale=0.45]{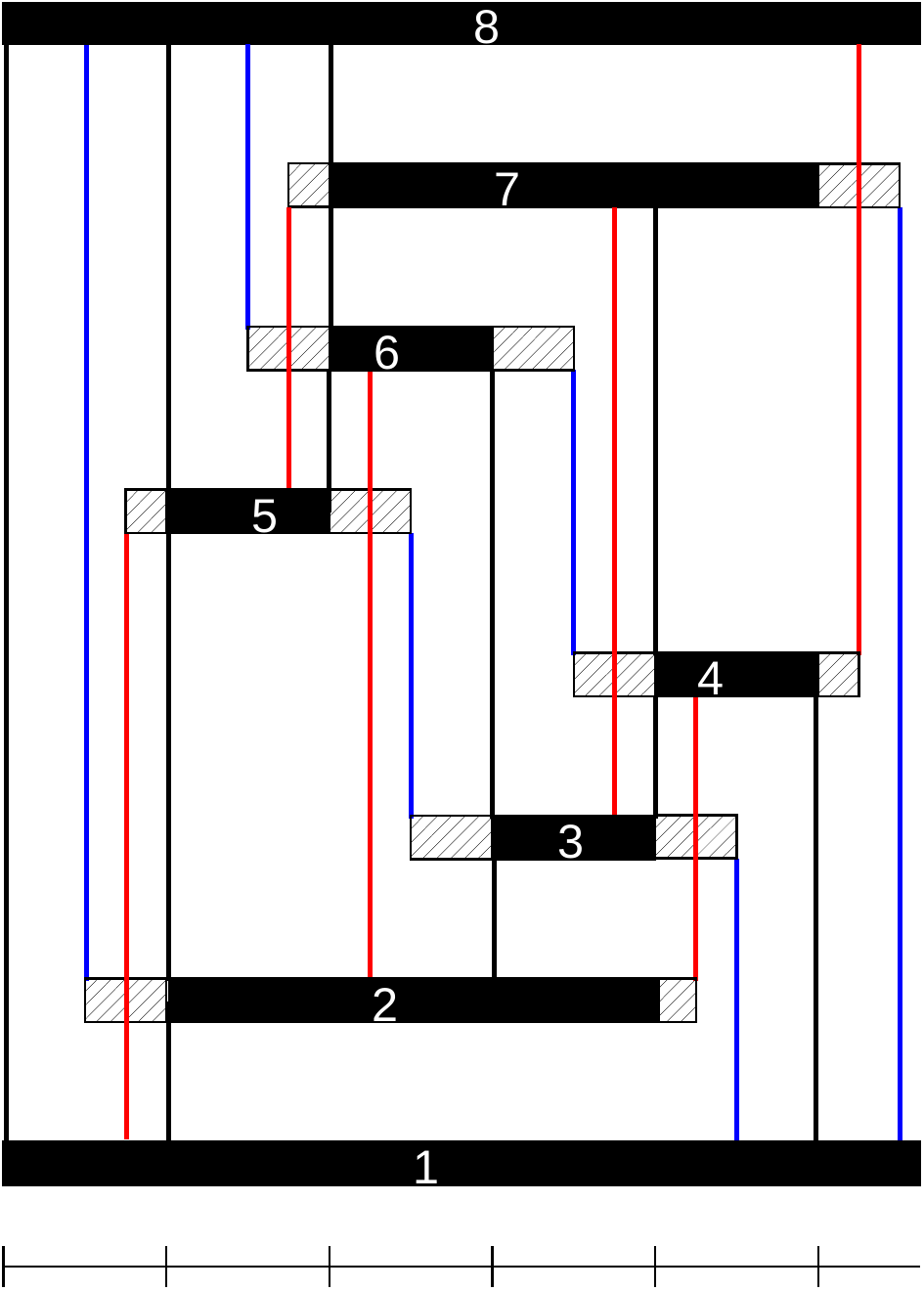}
    \label{fig:XQ8vis}
  }
  \caption{The extended wheel graph $XQ_8$ in common style and
  in visibility representation. The planar quadrilization is drawn with bold black lines,
  and each face has a red edge crossing a blue edge.
  \label{XQ8}}
\end{figure}
 However, there are 1-visibility graphs with $4n-8$ edges which are  not
 1-planar,
 including the complete graph on $7$ vertices without one edge, $K_7$-$e$,   which is not 1-planar
 \cite{bhw-1og-84,s-rm1pg-10}.
 Hence, the 1-visible graphs properly include the 1-planar graphs,
 even for maximally dense graphs.

\section{Preliminaries}

Consider simple undirected graphs $G = (V,E)$ with $n$ vertices and
$m$ edges. We suppose that the graphs are 2-connected, otherwise,
the components are treated separately, and are placed next to each
other as in \cite{TT-vrpg-86}. Note that  articulation points may
cause problems in visibility representations and make the difference
between weak and $\epsilon$-visibility. This difference vanishes if
the articulation points are in one face \cite{TT-vrpg-86}.
Articulation points do not matter in the  weak version of
visibility.

A \emph{drawing} of a graph is a mapping of $G$ into the plane such
that the vertices are mapped to distinct points and each edge is a
Jordan arc between its endpoints. A drawing is \textit{planar} if
the Jordan arcs of the edges do not cross and it is
\textit{1-planar} if each edge is crossed at most once. In 1-planar
drawings  crossings of edges with the same endpoint are excluded.

An embedding $\mathcal{E}(G)$ of a planar graph $G$ specifies
\emph{faces}. A face is a topologically connected region and is
given by a cyclic sequence of edges and vertices  that forms its
boundary. One of the faces is unbounded and is called the
\emph{outer face}.

Accordingly, a \emph{1-planar embedding} $\mathcal{E}(G)$ specifies
the faces in a 1-planar drawing of a graph $G$ including the outer
face. A 1-planar embedding is a witness for 1-planarity.  In
particular, it describes the pairs of crossing edges and the face
where the edges cross. Here a face is given by a cyclic list of
edges and half-edges and their vertices and crossing points. A
\emph{half-edge} is a segment of an edge from a vertex to a crossing
point. Each crossing point in a 1-planar embedding is incident to
four half-edges. If the crossing points are taken as new vertices
and the half-edges  as  edges, then we have the \emph{planarization}
of $\mathcal{E}(G)$, which is an embedded planar graph. This
structure is used by algorithms operating on  $\mathcal{E}(G)$,
where crossing points always remain as vertices of degree four and
may need a special treatment. Two 1-planar (planar) embeddings
$\mathcal{E}_1(G)$ and $\mathcal{E}_2(G)$ of a graph are
\emph{equivalent} if there is a homeomorphism $h$ on the plane with
$\mathcal{E}_2(G) = h(\mathcal{E}_1(G))$. Then one embedding can be
transformed into the other while preserving all faces including the
outer face. Such transformations are \emph{embedding preserving}.

A  \emph{visibility representation} of a planar graph displays the
vertices as horizontal bars, called vertex-segments, and two bars
must see each other along a vertical edge-segment if there is an
edge between the respective vertices. This is the \emph{weak}
version of visibility, where vertex-segments can see each other but
their vertices are not necessarily connected by an edge.

In this paper, we generalize visibility representations such that
they fit to 1-planar graphs. We use weak visibility, since we wish
to preserve the subgraph property: every subgraph of a 1-visible
graph is 1-visible. The vertex-segments include their extremes and
start and end at grid points. The $\epsilon$ and strong versions of
1-visibility do not seem useful, since many planar graphs cannot be
represented that way, such as circles of length at least four. The
if and only if condition between edges and 1-visibility enforces at
least one chord between non-adjacent vertices.

\begin{definition}
A 1-visibility representation of a graph $G=(V,E)$ displays each
vertex $v$ as a horizontal vertex-segment $\Gamma(v)$ and each edge
$e=(u, v)$ as a vertical edge-segment $\Gamma(e)$ from some point on
$\Gamma(u)$ to some point on $\Gamma(v)$.  The endpoints of all
segments are grid points. Vertex segments (edge segments) do not
overlap (in their interior).  Each vertex-segment is crossed by at
most one edge-segment and each edge-segment  may cross at most one
vertex-segment.
\end{definition}

Notice that 1-visibility drawings are straight-line drawings on
grids and there are right angle crossings between edges and
vertices. Hence, we have a new type of  RAC drawings
\cite{del-dgrac-11,el-racg1p-13}.

\section{Basic Properties}

A 1-planar embedding is \textit{planar maximal} if no further edge
can be added without inducing a crossing or multiple edges.  A
1-planar embedding can be augmented to a planar maximal embedding
via its planarization, where crossing points remain as vertices of
degree four. The augmentation can be computed in linear time from
the embedding. Note that the maximality depends on the embedding and
a different embedding of a graph may give rise to another maximal
planar augmentation, as the transformation of a B-configuration in
Fig. \ref{fig:B-configuration} into a X-configuration in Fig.
\ref{fig:X-configuration} illustrates. In X-configurations all four
vertices may have outer neighbors and there are at most three such
vertices in a B-configuration. Planar maximal embeddings have nice
properties.

\begin{lemma}
Let $\mathcal{E}(G)$ be a planar maximal 1-planar embedding.
\begin{enumerate}
  \item Every crossing induces a $K_4$ of the end vertices of the crossing edges.
  \item A face has  at most four  vertices, and there are
  such faces.
  \item Every (inner or outer) face is at most a $k$-gon with $k \leq 8$,
  where vertices and crossing points or alternatively half-edge
   are counted.
   \item A face has at most four crossing points.
\end{enumerate}
\end{lemma}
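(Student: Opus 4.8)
The plan is to prove the first statement by a local ``corner-routing'' argument and then to read the other three off the combinatorics of a single face boundary.

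For the first statement, fix a crossing of $ac$ and $bd$ at a point $x$ whose four half-edges leave $x$ in rotational order $xa, xb, xc, xd$, so that $ac$ and $bd$ are the two diagonals. Because $G$ is $1$-planar, neither $ac$ nor $bd$ is crossed anywhere but at $x$, so all four half-edges are otherwise uncrossed; consequently a small corner of each of the four sectors at $x$ contains no other edge, as any intruding edge would have to cross a bounding half-edge. In the sector between $xa$ and $xb$ I would route a curve from $a$ that hugs $xa$ into a tiny neighbourhood of $x$ and then hugs $xb$ out to $b$; this curve crosses nothing, so $ab$ could be inserted unless it is already present. Planar maximality therefore forces $ab \in E$, and the three analogous sectors give $bc, cd, da \in E$. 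Together with $ac$ and $bd$ this is the claimed $K_4$.

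For the second statement I would argue on the boundary of an arbitrary face $f$, which, since $G$ is $2$-connected, is a simple cycle of the planarization whose corners are real vertices and crossing points. Two crossing points are never consecutive on this cycle, since every half-edge has a real endpoint; hence each crossing-point corner is flanked by two real-vertex corners. Now let $u,w$ be two real-vertex corners that are not consecutive in the cyclic order of the real corners. The segment $uw$ drawn inside $f$ meets nothing, because the interior of a face is empty, so by maximality $uw$ is already an edge; as $f$ is empty, this edge lies in the complementary region $D'$, which is a disk bounded by the boundary cycle of $f$ when the drawing is regarded on the sphere. Suppose $r \ge 5$ and pick five real-vertex corners $u_1,\dots,u_5$ consecutive in the cyclic order. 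The five ``second-neighbour'' chords $u_1u_3, u_2u_4, u_3u_5, u_4u_1, u_5u_2$ all exist by the previous sentence and all lie in $D'$, and on the boundary circle each of them interleaves with exactly two of the others. By the Jordan curve theorem interleaved endpoints force an odd, hence positive, number of crossings inside $D'$, so each of these five chords is crossed at least twice, contradicting $1$-planarity. Therefore $r \le 4$, and the bound is attained by the outer quadrilateral of the $K_4$ drawn with crossing diagonals, which is itself a planar maximal embedding.

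The last two statements are now immediate. Since crossing-point corners are pairwise non-consecutive, each of the $c$ crossing points on $f$ is separated along the boundary by at least one real vertex, so $c \le r \le 4$, giving the fourth statement. Finally the number of corners is $k = r + c \le 4 + 4 = 8$, and counting corners is the same as counting the full edges and half-edges traversed along the boundary, which yields the third statement; the value $k = 8$ is realised by a face alternating four crossing points and four vertices. I expect the real work to sit in the second statement, and specifically in making the parity/interleaving step watertight: I must ensure the non-consecutive chords genuinely live in the single disk $D'$ and that interleaved endpoints force a crossing even though $D'$ also contains other vertices, edges and crossing points. The corner-routing of the first statement and the reduction of the last two statements to the non-adjacency of crossing-point corners are comparatively routine once the boundary cycle of $f$ is understood.
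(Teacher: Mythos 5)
Your proposal is correct and takes essentially the same route as the paper: part (1) by routing the missing $K_4$ edges alongside the crossing edges and invoking planar maximality, part (2) by observing that maximality forces all five pentagon chords to exist in the complementary disk, where interleaving endpoints (Jordan curve argument) make each chord cross two others, and parts (3) and (4) from the fact that two crossing points are never consecutive on a face boundary, so that $c \le r \le 4$ and $k = r + c \le 8$. The paper's own proof states exactly these steps, only more tersely (citing Bodendiek et al.\ for (1), asserting the pentagon-chord impossibility for (2), and giving a figure for (3) and (4)), whereas you spell out the counting and topological details.
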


\begin{proof}
The first statement is due to the fact that  missing edges between
the end vertices can be routed near to the crossing edges. This has
been stated at several places, first of all in \cite{bhw-1og-84}.
For (2) the chords of a pentagon cannot be realized in a single
inner or outer face such that each chord has at most one crossing,
whereas a quadrangle can be realized as shown by  B-configuration
with an inner face with four vertices, see Fig.
\ref{fig:B-configuration}. Accordingly, there cannot be more than
$8$ half-edges in a face of a planar maximal 1-planar embedding, see
Fig. \ref{fig:face-8}, which also implies (4). \qed
\end{proof}

\begin{figure}
   \begin{center}
     \includegraphics[scale=0.4]{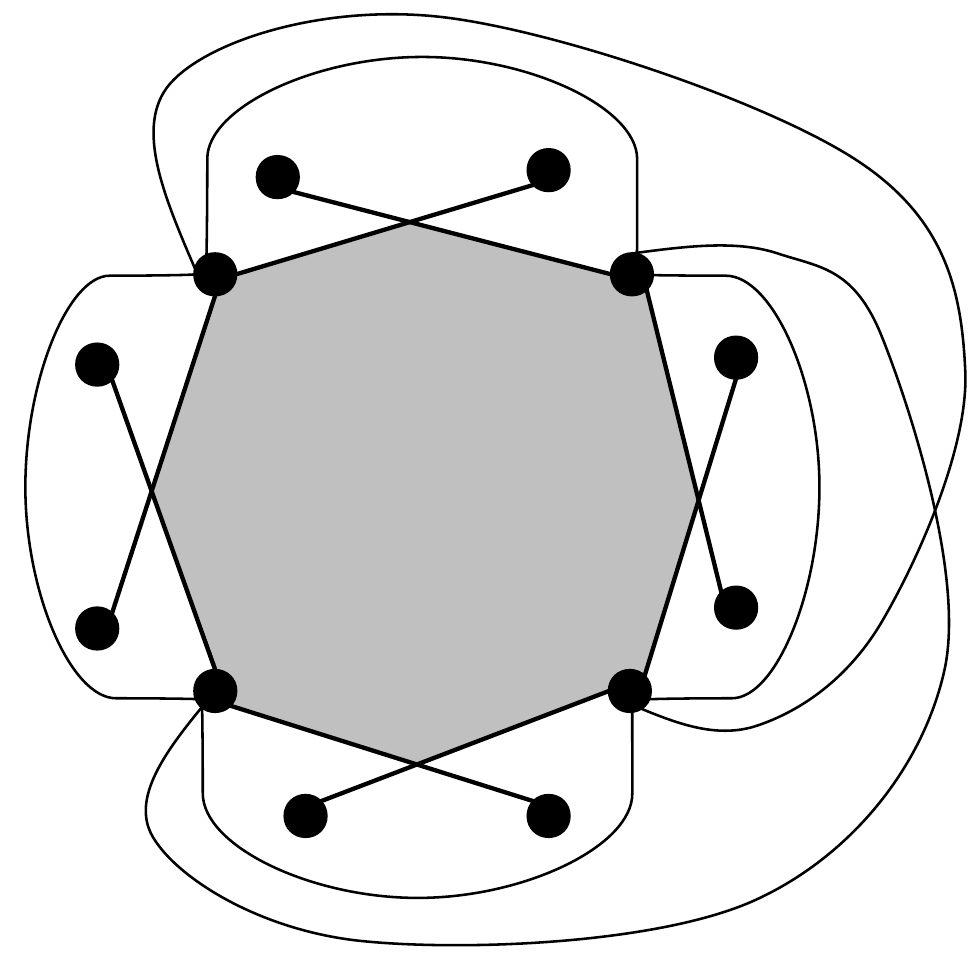}
     \caption{A face with 8 half-edges and four crossing points.
     \label{fig:face-8}}
   \end{center}
\end{figure}

Similar properties  were established
 for 1-planar embeddings without B- and W-configurations by Hong et al.
\cite{help-ft1pg-12} and by Alam et al. \cite{abk-sld3c-13} for
3-connected 1-planar graphs. Moreover, the faces can be simplified
if the embedding is changed. Then edges cross internally as in an
augmented X-configuration or externally as in a W-configurations,
and there
are at most two crossing points per face.\\

Eggleton \cite{e-rdg-86} raised the problem which  1-planar graphs
have drawings with straight-line edges. He solved this problem for
outerplanar graphs, where all crossing points are internal and
appear in X-configurations. Thomassen \cite{t-rdg-88}  characterized
the rectilinear 1-planar embeddings  by the exclusion of B- and
W-configurations, which are shown in Fig. \ref{fig:BWXconfig}.

\begin{definition}
Consider a 1-planar embedding.
 A \textit{B-configuration}
$B(u,v,p, x,y)$ consists of an edge $(u,v)$, called the base, and
two edges $(u, y)$ and $(v,x)$, which cross in some point $p$ such
that $x$ and $y$ lie in the interior of the triangle $(u, v, p)$.

A \textit{W-configuration} $W(u,v,p,p', x,y)$ consists of the base
edge $(u,v)$ and two pairs of edges $(u, y)$ and $(v, x)$ and $(u,
y')$ and $(v,x')$ which cross in points $p$ and $p'$ such that the
ends $x,y,x',y'$ lie in the interior of the quadrangle $u, p, v,
p'$.

A \textit{X-configuration} consists of a base edge and a pair of
crossing edges, where the crossing point lies in the interior of the
quadrangle of the endpoints of the edges.

A B- (W- and X-) configuration is \textit{augmented} if it contains
the probably missing  edges $(u,v), (u,x), (x,y), (y,v)$ and also
$(u, x'), (v,y'), (x', y')$ for augmented W-configurations.
\end{definition}

In the augmentations the edges cross in the inner face of a
X-configuration and in the outer face of a B-configuration. A
W-configuration comprises both.

 Note that the type of a configuration depends on the
embedding and the choice of the outer face or the routing of the
 base edge, which is drawn black and bold in Fig.
\ref{fig:BWXconfig}. A B-configuration becomes an X-configuration if
the inner and outer faces are exchanged, and vice-versa. In a
W-configuration the roles of the straight-line and curved crossing
edges swap by this exchange.
 This observation was used by Alam et al. \cite{abk-sld3c-13} in
their normal form theorem for embeddings of 3-connected 1-planar
graphs. Here, a given embedded 1-planar graph is first augmented by
planar edges to a planar maximal 1-planar graph and then the
embedding is transformed into normal form by local changes in the
cyclic order of the neighbors of some vertices. We recall this
result and stress the proof.

\begin{definition}
A planar maximal 1-planar embedding $\mathcal{E}(G)$ of a
3-connected 1-planar graph is in \textrm{normal form} if it has at
most one augmented W-configuration in the outer face, no augmented
B-configuration, and an augmented X-configuration does not contain a
vertex inside   the boundaries of the quadrangle of its endpoints.
\end{definition}

\begin{proposition} (Normal Form Theorem) \cite{abk-sld3c-13} \\
Let $G = (V,E)$ be a 3-connected 1-planar graph and $\mathcal{E}(G)$
a 1-planar embedding. There is a linear time algorithm to  transform
 $\mathcal{E}(G)$ into a planar maximal 1-planar embedding of a supergraph
 $H = (V, F)$ with $E \subseteq F$ such that
$\mathcal{E}(H)$ is in normal form.
\end{proposition}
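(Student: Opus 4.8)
The plan is to proceed in three stages — augment, flip, fix the outer face — and then to account for the running time.

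First I would augment $\mathcal{E}(G)$ to a planar maximal 1-planar embedding. By the first part of the Lemma, in such an embedding every crossing is surrounded by a $K_4$ on the four endpoints of its two crossing edges, so it suffices to add, for every crossing, the at most four missing quadrangle edges, each routed close to the crossing edges. A single traversal of the planarization that processes each degree-four crossing vertex locally inserts all of these edges, so this stage is clearly linear. Afterwards every pair of crossing edges forms a \emph{kite}: a $4$-cycle $u,x,y,v$ together with its two diagonals $(u,y)$ and $(x,v)$, which cross at a point $p$; by the Lemma each such kite presently realizes a B-, W-, or X-configuration.

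Second, I would remove the B-configurations. The tool is the observation recorded before the statement: exchanging the inner and outer faces of a kite converts a B-configuration into an X-configuration and conversely. Combinatorially this is a purely local change of the rotation system at $u,v,x,y$ and at the crossing point $p$ that reroutes the two crossing edges so that $p$ moves from outside the quadrangle $u,x,y,v$ to inside it; performing it at every interior B-configuration replaces that configuration by an X-configuration. Two facts remain to be checked, both of which rely on the planar maximal, $3$-connected structure: that the quadrangle of each resulting X-configuration is empty, as the normal form demands, and that flips at kites sharing a vertex can be scheduled so as not to interfere with one another.

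The main obstacle is the W-configuration. Unlike a B-configuration, a W-configuration is invariant under the inner/outer flip — the same observation notes only that the roles of the straight and the curved crossing edges swap — so no single flip turns it into an X-configuration, and W-configurations must be handled differently. Since the type of a configuration depends on the choice of outer face, the lever I would use is to fix the outer face to be a face of one designated W-configuration and argue that every \emph{other} would-be W-configuration then presents itself as a B- or X-configuration, which the second stage flips to an X-configuration. The delicate point, and the crux of the whole argument, is to prove that at most one W-configuration is genuinely unavoidable in this sense; establishing this, and the consistency of the global outer-face choice with the local flips, is where $3$-connectivity and planar maximality are used most heavily. The outcome is an embedding with no augmented B-configuration and at most one augmented W-configuration, which lies in the outer face.

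Finally, augmentation, the classification of kites, and each flip are all local and touch every vertex, edge, and crossing only a bounded number of times, so the entire transformation runs in linear time and produces the planar maximal, normal-form embedding $\mathcal{E}(H)$ of a supergraph $H=(V,F)$ with $E\subseteq F$.
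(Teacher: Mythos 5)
Your three-stage skeleton (augment to planar maximal, flip B-configurations into X-configurations by a local re-routing, leave at most one W-configuration in the outer face) is the same skeleton as the paper's proof, and your first two stages essentially coincide with it. But the step you yourself call ``the crux of the whole argument'' --- that at most one W-configuration can survive, and only in the outer face --- is precisely the step you never prove, and the strategy you sketch for it would fail. A W-configuration contains \emph{two} crossings; re-choosing the outer face cannot make another W-configuration ``present itself as a B- or X-configuration'' (each of which has a single crossing). As you yourself quote from the paper, the inner/outer exchange merely swaps the roles of the straight and curved crossing edges of a W-configuration: it stays a W-configuration. The paper's actual argument is a one-liner that does not appear anywhere in your proposal: the base $\{u,v\}$ of a W-configuration is a \emph{separation pair}, because deleting $u$ and $v$ disconnects $x,y,x',y'$ (and everything else inside the quadrangle $u,p,v,p'$) from everything outside it. In a 3-connected graph this is a contradiction unless the outside contains no vertices at all, i.e., unless the W-configuration bounds the outer face. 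This simultaneously rules out interior W-configurations and the coexistence of two W-configurations; no designated outer face and no global consistency argument between local flips is needed.

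Two further omissions are worth noting. First, your augmentation only adds the \emph{missing} kite edges, whereas the paper also replaces an already-existing quadrangle edge by a fresh copy routed along the crossing edges (discarding the older copy); this re-routing, not 3-connectivity, is what guarantees that the quadrangle of each resulting X-configuration contains no vertex --- the first of the two ``facts remaining to be checked'' in your second stage. Second, your claim that flipping every interior B-configuration turns it into an X-configuration is false when two B-configurations sit on opposite sides of a common base: re-routing the base then just exchanges which of the two is the B-configuration. The paper explicitly treats this case by merging such a pair into a W-configuration, which is then subject to the separation-pair argument above; your flipping schedule needs this case distinction to terminate correctly.
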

\begin{proof}
For each pair of crossing edges $(a,c)$ and $(b,d)$ in
 $\mathcal{E}(G)$ add new edges $(a,b), (b,c), (c,d), (d,a)$
such that they are neighbors of the crossing edges at the end
vertices in  $\mathcal{E}(G)$. In other words, the edges are routed
along the crossing edges. If such an edge was already in $E$, then
remove it (the older) from $\mathcal{E}(G)$. This triangulates the
faces at crossing points. Thereafter, triangulate the planar faces.
These steps take  linear time, starting from $\mathcal{E}(G)$. The
result is the embedded supergraph $H$. Now all B-, W-, and
X-configurations are augmented. Each augmented B-configuration which
is not a W-configuration is transformed into a X-configuration by
the re-routing of the base. Two B-configurations on opposite sides
of the base and connected by an edge crossing the base are merged to
a W-configuration. There is no vertex inside the boundaries of the
end vertices of an X-configuration. $\mathcal{E}(H)$ cannot contain
two augmented W-configurations or a W-configuration in its interior,
since the base of a W-configuration is a separation pair, which is
excluded by 3-connectivity. Hence, $\mathcal{E}(H)$ is planar
maximal by the augmentation and triangulation.
 \qed
\end{proof}

The normal form theorem holds for every 3-connected component of a
1-planar graph $G$. Suppose that  $G$ is 2-connected with an
embedding $\mathcal{E}(G)$ with planar maximal 3-connected
components in normal form. For every separation pair $\{ u,v \}$
there is a sequence of 3-connected 1-planar graphs $C_0,\ldots,
C_{k-1}$ in clockwise order at $u$, and each pair of adjacent
components $C_i$ and $C_{i+1}$ with $0 \leq i \leq k-1$ is separated
by a pair of crossing edges from a W- or an X-configuration or both.
Otherwise, such components merge to a single planar maximal
3-connected component. If the separation pair is in the outer face
and there is no a pair of crossing edges from a W-configuration in
the outer face, then the outermost copy $e_k$ can be saved.

To separate the components at a separation pair $\{u.v\}$ even
further we allow multiple edges and introduce copies $e_i$ for $i=1,
\ldots, k$ of the edge $e_0 = (u,v)$  as \emph{separation edges}.
The $i$-th separation edge $e_i$ is routed next to a pair of
crossing edges which separates $C_{i-1}$ from $C_i$.If present the
outermost separation edge $e_k$ encloses all components and the
multi-edges $e_0, e_k$ form the outer face. This situation also
holds relative to a separation pair.

For a counting argument each separation edge can be taken for an
edge between the components it separates or from a crossing point to
a  vertex or another crossing point in the planarization.

All steps for the augmentation with 3-connected components in normal
form and separation edges take linear time on $\mathcal{E}(G)$. Thus
we can state.

\begin{lemma}
Every 1-planar embedding $\mathcal{E}(G)$ can be transformed in
linear time into a planar maximal 1-planar embedding
$\mathcal{E}(G')$ of a supergraph $G'$. $G'$ is a graph with
multi-edges, each 3-connected component of $\mathcal{E}(G')$ is in
normal form and there is a separation edge between  adjacent
$3$-connected components at a separation pair $\{ u,v \}$.
\end{lemma}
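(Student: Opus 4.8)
The plan is to assemble $\mathcal{E}(G')$ from the building blocks developed above, namely the Normal Form Theorem applied to each triconnected piece and then glued together by separation edges. Since $G$ is $2$-connected by our standing assumption, I would first compute its triconnected decomposition (for instance via the SPQR-tree) in linear time, which exposes the $3$-connected components together with the separation pairs that link them.

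For each $3$-connected component I would invoke the Normal Form Theorem, augmenting it to a planar maximal $1$-planar embedding in normal form in linear time; this yields at most one augmented W-configuration in the outer face, no augmented B-configuration, and empty quadrangles for the augmented X-configurations. I would then treat each separation pair $\{u,v\}$ as in the discussion preceding the lemma: order the incident components cyclically as $C_0, \ldots, C_{k-1}$ around $u$, note that consecutive components are separated by a crossing pair coming from a W- or X-configuration, and insert into each such separating face a fresh copy $e_i$ of the base $(u,v)$ routed alongside that crossing pair. These copies are precisely the multi-edges that make $G'$ a multigraph, and if the separation pair lies on the outer face an additional outermost copy $e_k$ encloses all components.

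It then remains to check the three asserted properties. The embedding is planar maximal because the per-component augmentation triangulates the interior of each component while the separation edges close the faces left open at the separation pairs; each component stays in normal form because every separation edge is added strictly between two components and hence outside the region governing the component's configuration; and the running time is linear because the decomposition, the componentwise normal-form transformation, and the separation-edge insertions each cost linear time, with the number of new edges charged to distinct crossings or faces of $\mathcal{E}(G)$.

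The main obstacle I anticipate is the interaction at the separation pairs: I must guarantee that inserting a separation edge neither produces a second crossing on any edge nor disturbs the normal form already established inside the components. This is exactly why each $e_i$ is routed parallel to the crossing pair separating two consecutive components, so that it lies in a previously planar face and crosses nothing. Because each component is first brought into normal form in isolation and the separation edges are then added only in the regions between components, the per-component configuration counts are left untouched, and the reassembled embedding $\mathcal{E}(G')$ inherits the normal-form property componentwise.
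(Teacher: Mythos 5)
Your proposal is correct and follows essentially the same route as the paper: apply the Normal Form Theorem to each 3-connected component of the (2-connected) graph, then at every separation pair $\{u,v\}$ insert copies of the base edge $(u,v)$ as separation edges routed alongside the crossing pairs that separate consecutive components in the cyclic order, with an outermost copy when the pair lies on the outer face, and charge all work linearly. The paper's own justification is exactly this discussion (componentwise normal form, separation edges next to W-/X-crossing pairs, and a charging argument for the added edges), so no further comparison is needed.
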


This situation is depicted in Fig. \ref{Fig:separation}, where the
copies are the edge $e_0 = \{u,v\}$ is drawn dotted and blue.

\begin{figure}
   \begin{center}
     \includegraphics[scale=0.55]{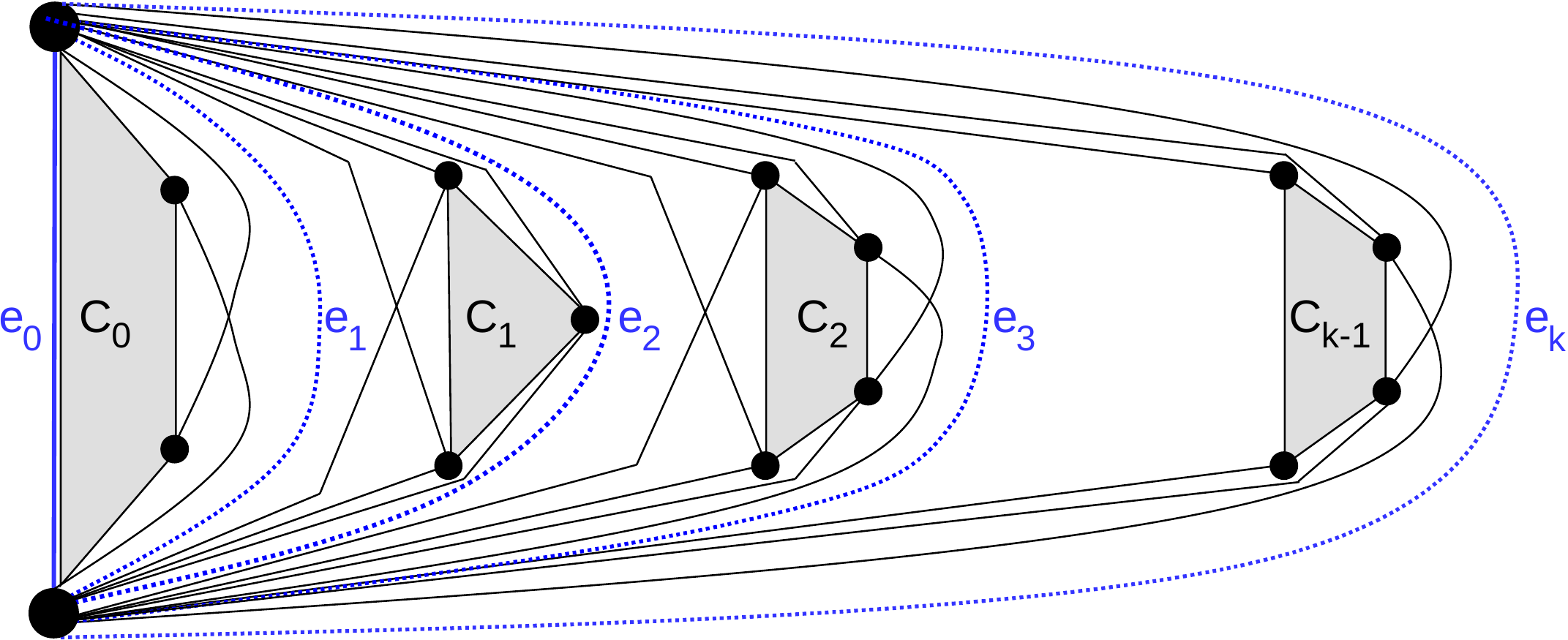}
     \caption{A sequence of planar maximal 1-planar graphs at a separation pair $\{u,v\}$}
     and separating edges.
     \label{Fig:separation}
   \end{center}
\end{figure}

Hong et al. \cite{help-ft1pg-12} showed that a 1-planar embedding
can be transformed into a straight-line 1-planar drawing, which
preserves the embedding, provided there are no B- and
W-configurations. Their algorithm is quite complex and uses the
SPQR-tree data structure for the decomposition of the graph into its
3-connected components and the convex drawing algorithm for planar
graphs from \cite{cyn-lacdpg-84}, which needs a high resolution for
its numerical computations. There is no stated bound on the area,
but it is likely to be exponential. However, each augmented B- and
W-configuration induces one edge with a bend if the embedding is
preserved. Hence, a straight-line drawing of a 1-planar graph may
have a linear number of edges with a bend. The sparse maximal
1-planar graphs  from Fig. 3 in \cite{begghr-odm1p-13} may serve as
an example.

Here we capture all 1-planar graphs  and provide a 1-visibility
representation with straight vertical lines for all  edges.

\section{Visibility Representation}

In this section we show that every 1-planar graph has a 1-visibility
drawing. The result is obtained by the 1-VISIBILITY algorithm, whose
input is an embedding $\mathcal{E}(G)$ as a witness for 1-planarity.
After a planar maximal augmentation it considers each 3-connected
component $C$, transformes $C$ into normal form, and separates
3-connected components at a separation pair by separation edges.
Then the graph and in particular each 3-connected component is
planarized by the extraction of the pairs of crossing edges. The
normal form and the separation edges guarantee that each face has at
most one pair of crossing edges. The so obtained planar graph is
drawn by the common planar visibility algorithm. Thereafter,
CROSSING-INSERTION  reinserts each pair of crossing edges in the
face from which is was extracted. Finally, the edge-segments of
added edges are hidden.

Consider a  planar visibility algorithm  from
\cite{dett-gdavg-99,rt-rplbopg-86, TT-vrpg-86}.  It  takes an
embedded planar graph and two vertices $s,t$ in the outer faces and
directs the edges according to an $st$-numbering from $s$ to $t$.
Thereafter each vertex $v$ except $s,t$ has a neighbor with a
smaller and a larger st-number than itself and two sub-sequences of
incoming and outgoing edges. In other words, each vertex and $G$ are
bi-modal \cite{rt-rplbopg-86}. Route the edge $(s,t)$ to the left of
the drawing of $G$. Then consider the directed dual $G^*$, where
$s^*$ is the face to the right of the $(s,t)$ edge (or the left half
of the outer face) and $t^*$ is (the right half of) the outer face,
and direct its edges according to the $s^*t^*$-numbering of $G^*$.
Recall that $G$ was extended by separation edges between
$3$-connected components, which has an impact on $G^*$.

Define the  \emph{distance} $\delta(v)$ of a vertex $v$ by its
st-number as in \cite{rt-rplbopg-86, TT-vrpg-86} or for a more
compact drawing \cite{dett-gdavg-99} by the length of a longest path
from $s$ and accordingly define the dual distance $\delta^*(f)$ of a
face $f$ in $G^*$.  Then $\delta(s)=0$, $\delta(t)=h-1$,
$\delta^*(s^*)=0$ and $\delta^*(t^*) = w-1$ for some $h \leq n$ and
$w \leq 2n-5$ and the visibility representation is of size $w \times
h$. The insertion of separation edges does not affect the upper
bound of $2n-5$, since for each separation edge $e_i$ there is at
least one edge missing from $C_i$ to the next component $C_{i+1}$ in
cyclic order. For the compacted version one must take care that the
distance is different for vertices $b$ and $d$ of a quadrangle $f =
(a,b,c,d)$, whose bottom and top are $a$ and $c$ if there is an
augmented X-configuration. The  requirements are met by the
st-number and can otherwise be achieved by a local lifting as in
\cite{b-sdogs-11}. Moreover, if $\{u,v\}$ is a separation pair with
a sequence of $3$-connected components $C_0,\ldots, C_{k-1}$ in
clockwise order at $u$ and separation edges $e_0, \ldots, e_k$ and
the st-number of $u$ is smaller than the st-number of $v$, then the
st-numbering implies that $\delta(u) < \delta(w) < \delta(v)$ for
every vertex $w$ from any component $C_i$ and $\delta^*(e_{i-1}) <
\delta^*(f) < \delta^*(e_i)$ if $f$ is an inner face of $C_{i-1}$
and $\delta^*(e_i)$ is the dual distance of the face immediately to
the left of $e_i$.

For each edge $e=(u, v)$ let $left(e)$ ($right(e)$) be the dual
distance $\delta^*(f)$ of the face $f$ of $G$ to the left (right) of
$v$ and let $left(v)$ ($right(v)$) be the least (largest) dual
distance
of a face incident with $v$.\\


\begin{algorithm}
  \caption{PLANAR-VISIBILITY}\label{alg:planar-visibility}

  \KwIn{A 2-connected planar graph $G$ (with multi-edges) with a planar embedding $\mathcal{E}(G)$.}

  \KwOut{A visibility representation $\mathcal{VR}(G)$.}

  Construct an st-numbering of  $G$ with $(s,t)$ on the left.\;
  Compute the dual graph  $G^*$.\;
  Compute the distance   $\delta(v)$ for all vertices $v$ of $G$  and the dual
  distance $\delta^*(f)$ for all faces $f$.\;
  \ForEach{vertex~$v$ of $G$}{%
    draw the vertex-segment $\beta(v)$ at the $x$-coordinate
    $\delta(v)$ from $\delta^*(left(v))$ to $\delta^*(right(v))-1$
    if $v \neq s,t$, and from $0$ to $\delta^*(t)$ for $s,t$.
  }
  \ForEach{edge $e=(u,v)$ of $G$}{%
    draw a vertical edge-segment between $(\delta^*(left(e),
    \delta(u))$ and  $(\delta^*(left(e), \delta(v))$
  }
\end{algorithm}

 The correctness of  PLANAR-VISIBILITY and the linear
running time was proved in
\cite{dett-gdavg-99,rt-rplbopg-86, TT-vrpg-86}.\\

We use PLANAR-VISIBILITY to draw 3-connected components $C_i$ of
1-planar graphs, whose pairs of crossing edges $(a,c)$ and $(b,d)$
are first extracted  and are then  reinserted in the face they left
behind. The normal form embedding and the added separation edge
$e_i$ to the right of $C_i$ guarantee that each pair of crossing
edges has its own face $f$, which is a quadrangle. $f$ comes from an
augmented X-configuration if it is an inner face or is the relative
outer face of a W-configuration and is immediately to the left of
$e_i$, where   $e_i$ is a separation edge.

For a face $f=(a,b,c,d)$ let
 $a$ be the lowest vertex in the visibility drawing of PLANAR-VISIBILITY,
 i.e., the y-coordinate $\delta(a)$ is minimal. We call
$f$ a \emph{left-wing} (\emph{right-wing}) if $\delta(a) < \delta(b)
< \delta(c) < \delta(d)$ and $b,c$  are to the left (right) of  $f$,
and a \emph{diamond} if $\delta(a) < \delta(b), \delta(d) <
\delta(c)$. $f$ is a left-wing if $f$ is the outer face or if
$(a,d)$ is a separation edge.

There are always two options, which of the two middle vertices of a
quadrangle  $f$ is crossed by an edge. A maximal bipartite matching
determines one vertex per face and guarantees that each vertex is
crossed at most once.

The crossing insertions are illustrated in Fig.
\ref{fig:crossinginsertion}.


\begin{algorithm}
  \caption{CROSSING-INSERTION}\label{alg:crossing-insertion}
  \KwIn{%
    A visibility representation of a face $f$ with the vertices
    $(a,b,c,d)$, where $a$ has the lowest $y$-coordinate $y(a)$, and a
    pair of edges $(a,c)$ and $(b,d)$ crossing in $f$, such that the
    vertex-segment of $b$ is crossed by the edge-segment of $(a,c)$.
    (The case where the other inner vertex is crossed is similar).
  }
  \KwOut{A 1-visibility representation of $f$ with $(a,c)$ crossing $b$.}

  \Switch{type of face~$f$}{%
      \Case{  $f$ is a left-wing}{%
        extend $\beta(b)$ by 0.5
        and  $\beta(c)$ by 0.25 units to the right and
        draw $(a,c)$ at the $x$-coordinate $\delta^*(f)-0.75$ and
        $(b,d)$ at $\delta^*(f)-0.5$
        }\Case{$f$ is a right-wing}{%
        extend $\beta(b)$ by 0.5 and  $\beta(c)$ by 0.25 units
        to the left   and draw $(a,c)$ at the $x$-coordinate
        $\delta^*(f)-0.25$ and
        $(b,d)$ at $\delta^*(f)-0.5$
        }\Case{$f$ is a diamond}{%
        extend $\beta(b)$ by $0.5$ units to the right
        and $\beta(d)$ by $0.5$ units to the left, draw $\beta(b,d)$
        at $\delta^*(f)-0.5$ and draw $\beta(a,c)$ at $\delta^*(f)-0.75$
        if $b$ is crossed and at $\delta^*(f)-0.25$ if  $d$ is crossed.
      }
  }
\end{algorithm}

\begin{figure}
  \centering
  \subfigure[left-wing]{
      \includegraphics[scale=0.5]{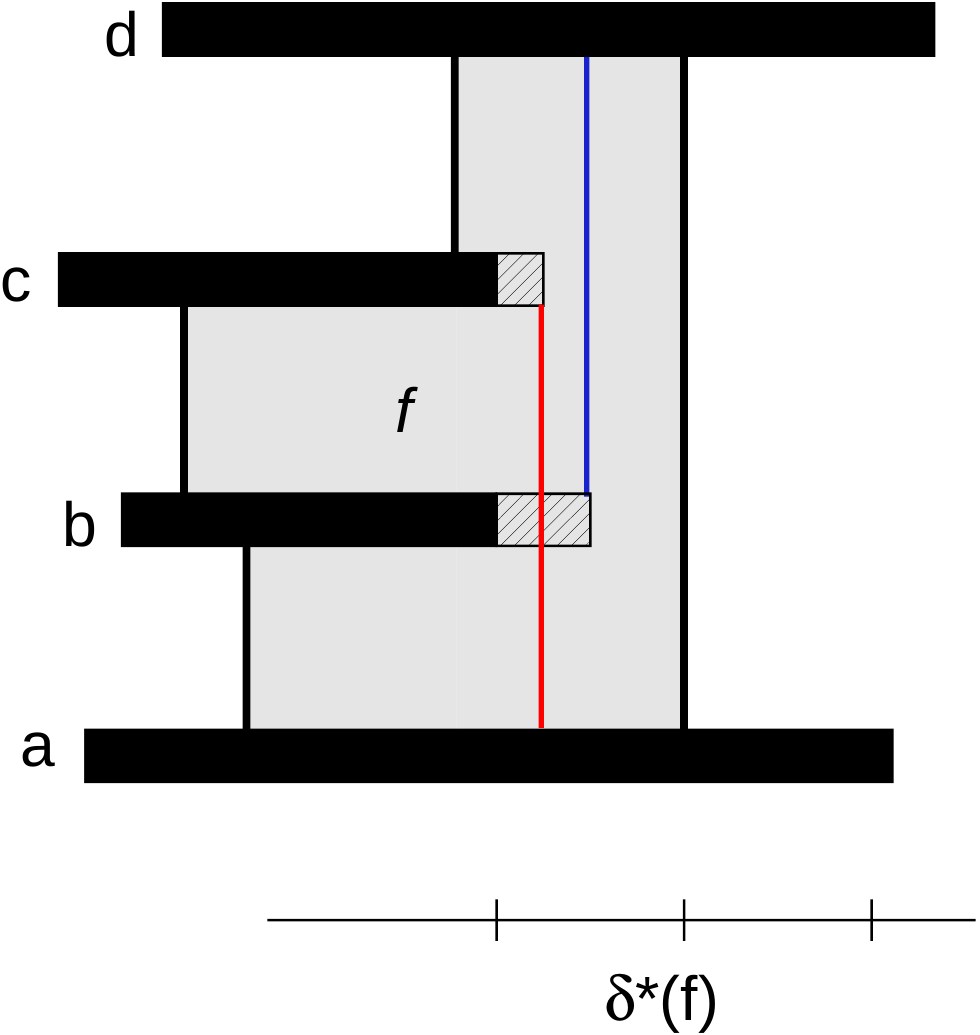}
      \label{fig:leftwing}
  }\quad\quad
  \subfigure[diamond]{
    \includegraphics[scale=0.5]{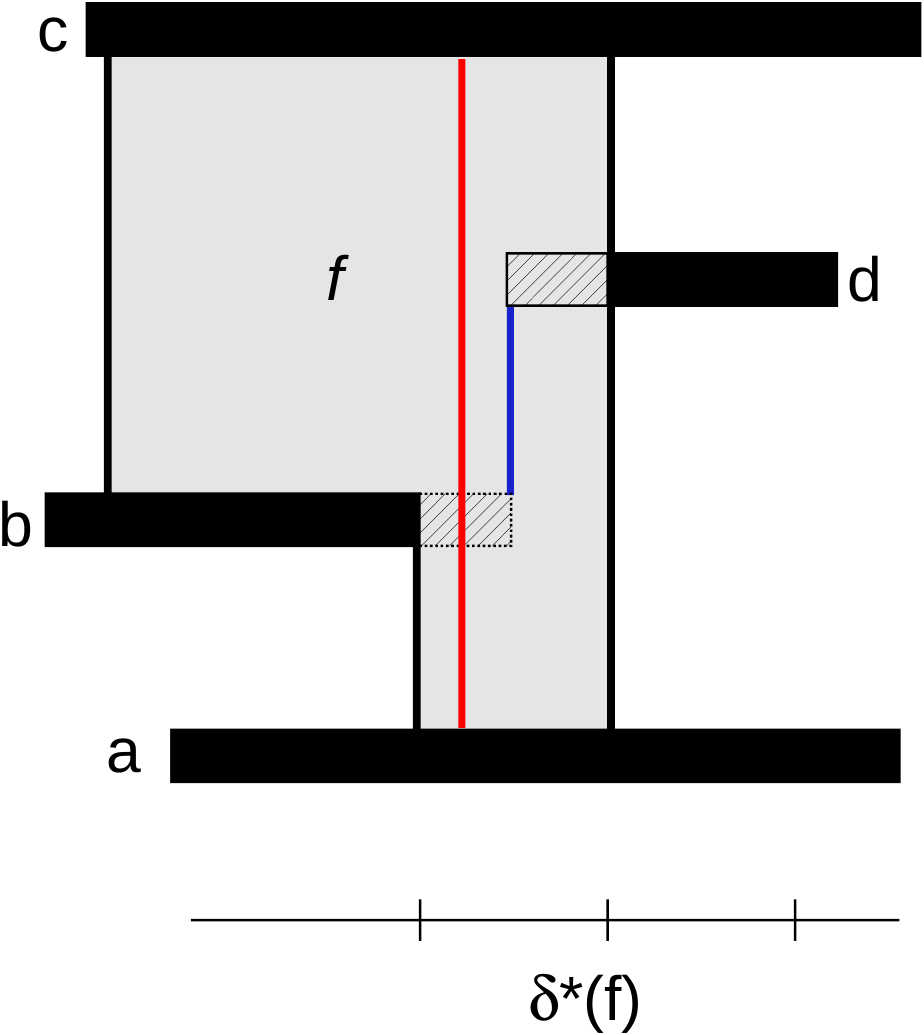}
    \label{fig:outerface}
  }
  \caption{Reinserting a pair of crossing edges $(a,c)$ and $(b,d)$ in face $f$ by an extension
  of the inner vertex-segments, where $(a,c)$ crosses  $b$.
  \label{fig:crossinginsertion}}
\end{figure}

\begin{lemma} \label{proofCrossingInsertion}
If a face $f = (a,b,c,d)$ is drawn by PLANAR-VISIBILITY, then
CROSSING-INSERTION adds the pair of crossing edges $(a,c)$ and
$(b,d)$ inside $f$ with exactly one vertex-edge crossing.
\end{lemma}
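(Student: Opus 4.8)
The plan is to read off the exact geometry that PLANAR-VISIBILITY produces on the quadrangle $f=(a,b,c,d)$ and then check, coordinate by coordinate, that the three vertical lines and the bar extensions prescribed by CROSSING-INSERTION lie inside $f$ and create a single crossing. First I would fix notation: after PLANAR-VISIBILITY every vertex $v$ is a horizontal bar at height $\delta(v)$ and every boundary edge of $f$ is a vertical segment whose $x$-coordinate equals the dual distance of the face on its left. Since $f$ is a quadrangle coming from an augmented X-configuration or the relative outer face of a W-configuration with the separation edge $e_i$ immediately to its right, its right boundary sits at $x=\delta^*(f)$, while the bars and edges bounding it on the left have $x$-coordinate at most $\delta^*(f)-1$. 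Hence the open strip $\delta^*(f)-1<x<\delta^*(f)$ is free of boundary material of $f$ at the intermediate heights, and the three abscissae $\delta^*(f)-0.75$, $\delta^*(f)-0.5$, $\delta^*(f)-0.25$ used by the algorithm all fall into this strip. This width-one room is exactly what the normal form and the separation edges guarantee.

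Next I would treat the face types separately. For a \emph{left-wing} ($\delta(a)<\delta(b)<\delta(c)<\delta(d)$ with $b,c$ on the left) the bottom bar $\beta(a)$ spans the whole width of $f$ up to $x=\delta^*(f)$, so the vertical segment for $(a,c)$ at $x=\delta^*(f)-0.75$ starts on $\beta(a)$; extending $\beta(c)$ by $0.25$ to the right makes its right end reach exactly $\delta^*(f)-0.75$, so $(a,c)$ ends on $\beta(c)$; and extending $\beta(b)$ by $0.5$ to the right makes $\beta(b)$ reach $\delta^*(f)-0.5$, so $(b,d)$ at $x=\delta^*(f)-0.5$ starts on $\beta(b)$ and ends on the top bar $\beta(d)$. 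The \emph{right-wing} case is the mirror image and needs no separate argument. For a \emph{diamond} ($\delta(a)<\delta(b),\delta(d)<\delta(c)$) the same bookkeeping applies after extending $\beta(b)$ to the right and $\beta(d)$ to the left so that both meet the line $x=\delta^*(f)-0.5$ carrying $(b,d)$, using the fact that $\delta(b)\neq\delta(d)$ so that $(b,d)$ is a genuine vertical segment. In every case both edges attach to points of the bars of their endpoints and run inside the free strip, hence they are legal edge-segments of $f$.

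I would then count crossings. Both new edges are vertical and sit at the distinct abscissae $\delta^*(f)-0.75$ and $\delta^*(f)-0.5$, so they never meet each other: the representation contains no edge-edge crossing, as 1-visibility demands, and the original crossing is realized solely as a vertex-edge incidence. The only such incidence produced is that $(a,c)$, which spans the heights from $\delta(a)$ to $\delta(c)$, passes the height $\delta(b)$, where its abscissa $\delta^*(f)-0.75$ lies on the rightward-extended bar $\beta(b)$ (whose right end is at $\delta^*(f)-0.5$); this is the prescribed single crossing. Symmetrically $(b,d)$ meets no bar: in the left-wing it passes $\delta(c)$, but $\beta(c)$ was extended only to $\delta^*(f)-0.75<\delta^*(f)-0.5$, so the line of $(b,d)$ misses it, and in the diamond no vertex height lies strictly between $\delta(b)$ and $\delta(d)$. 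Because all extensions reach into the interior of $f$, which was empty before reinsertion, no extended bar is met by any pre-existing boundary edge either, so the total count is exactly one.

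The main obstacle I anticipate is the diamond, where the bottom-and-top pair is $a,c$ but the remaining uncrossed vertex, say $d$, has height strictly between $\delta(a)$ and $\delta(c)$; then $(a,c)$ necessarily passes the height $\delta(d)$ and I must rule out a second crossing there. This is where the ordering of the fractional abscissae and the \emph{direction} of the extensions is decisive: $\beta(d)$ is extended to the left only as far as $\delta^*(f)-0.5$, while $(a,c)$ runs at $\delta^*(f)-0.75<\delta^*(f)-0.5$, so it stays strictly to the left of $\beta(d)$ and does not touch it. The companion facts I must nail down to make this airtight are the shape claim from PLANAR-VISIBILITY (that $\beta(a)$ spans the bottom and the right boundary of $f$ is at $\delta^*(f)$) and the width-one guarantee; both follow from the $st$- and $s^*t^*$-numberings together with the normal form and the separation edges introduced earlier, in particular the distinctness of $\delta(b)$ and $\delta(d)$ enforced for diamond faces.
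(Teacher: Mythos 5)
Your proposal is correct and takes essentially the same route as the paper's own proof: a case analysis over left-wing, right-wing (by symmetry), and diamond faces, with explicit coordinate bookkeeping showing that the boundary of $f$ leaves the strip between $\delta^*(f)-1$ and $\delta^*(f)$ free, that the extended bars and the two new vertical segments attach correctly, and that only $(a,c)$ meets the extended bar of the chosen middle vertex. If anything, your handling of the diamond case --- explicitly ruling out a second crossing of $(a,c)$ with the uncrossed middle bar via the ordering $\delta^*(f)-0.75 < \delta^*(f)-0.5$ and the direction of the extensions --- is spelled out more carefully than in the paper, which simply asserts that a single vertex-edge crossing results.
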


\begin{proof}
If $f$ is  a left-wing, then the vertex-segments of $b$ and $d$ end
at $\delta^*(f)-1$ and the edge-segments are at or to the left of
$\delta^*(f)-2$. The edge-segment of $(a,d)$ is at or to the right
of $\delta^*(f)$. Hence, the extension of $\beta(b)$ and $\beta(c)$
does not intersect the edge-segment of $(a,d)$. The edges $(a,c)$
and $(b,d)$ are routed inside $f$ and induce a crossing of $b$ and
$(a,c)$. The case where $f$ is a right-wing is symmetric. Then the
edge-segment of $(a,d)$ is at or to the left of $\delta^*(f)-1$, and
the edge-segments $(a,b), (b,c), (c,d)$ are right aligned at
$\delta^*(f)$. The vertex-segments of $b$ and $c$ begin at
$\delta^*(f)$. If $f$ is a diamond with $b$ on the left and $d$ on
the right, then $\beta(b)$ ends at $\delta^*(f)-1$ and $\beta(d)$
begins at $\delta^*(f)$, and the $y$-coordinates of $b$ and $d$ are
different, since the distance $\delta$ guarantees this property.
Again there is
 a single
vertex-edge crossing in $f$. The vertex-segments of the extreme
vertices cover the range from $\delta^*(f)-1$  to $\delta^*(f)$, and
generally go far beyond. \qed
\end{proof}

Finally, consider a separation pair $\{u, v \}$ and its 3-connected
$C_0, \ldots, C_{k-1}$, which are separated by separation edges
$e_1, \ldots, e_k$ as copies of $e_0 = (u,v)$. Associate $e_i$ with
$C_i$ as its base. Then the 3-connected components are sandwiched
between the vertex-segments of $u$ and $v$ and two adjacent
components $C_{i-1}$ and $C_i$ are clearly separated by $e_i$ in a
left-to-right order, which is due to the $st$- and
$s^*t^*$-numberings.


\begin{algorithm}

  \caption{1-VISIBILITY}\label{alg:1-visibility}

  \KwIn{An embedded 2-connected 1-planar graph $\mathcal{E}(G)$.}
  \KwOut{A 1-visibility representation $\mathcal{VR}(G)$ on a grid of quadratic size.}

  Augment $\mathcal{E}(G)$ to a planar maximal 1-planar embedding $\mathcal{E}(G')$,
  e.g., via a maximal planar augmentation of its planarization which keeps the crossing
  points at degree four.\;
  Decompose $G'$ into its 3-connected components.\;

  \ForEach{separating pair $\{u,v\}$}  {%
  compute the sequence of the 3-connected components
  $C_i$ for $i= 1, \ldots, k$ and add a copy $e_i$ of $(u,v)$
  as a separation edge to the right of $C_{i-1}$.
  }

  If the embedded graph has a crossing in the outer face, then
  add a copy of the base edge as a separation edge to cover the crossing from the outer face.
  Let $G''$ be the intermediate graph. \;
  Transform the embedding of each 3-connected component of $G''$ into normal
  form.\;
  Planarize $\mathcal{E}(G'')$ to $\sigma (\mathcal{E}(G''))$
     by the extraction of  all pairs of crossing  edges.\;

     Construct a visibility representation of $\sigma (\mathcal{E}(G''))$
     by PLANAR-VISIBILITY.\;

     (Separately for each 3-connected component) Compute the set of crossed vertex-segments by a
     maximum bipartite matching on the set of faces $F$ including a pair of crossing
     edges and the set   $I$ of inner vertices of the faces of $F$.\;

     Reinsert the crossing edges by CROSSING-INSERTION.\;

    Scale all x-coordinates by the factor $4$. \;

    Ignore or hide the edges from the augmentations to $G'$ and $G''$.

\end{algorithm}

We can now establish our main result.

\begin{theorem}

There is a linear time algorithm to construct a 1-visibility
representation of an embedded 1-planar graph on a grid of size at
most $(8n-20) \times (n-1)$.
\end{theorem}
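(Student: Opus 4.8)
The plan is to realize the 1-VISIBILITY algorithm as a reduction to the classical planar case, proving correctness, the area bound, and the running time in turn. First I would invoke the preceding transformation lemma to convert the input embedding $\mathcal{E}(G)$ in linear time into a planar maximal 1-planar embedding $\mathcal{E}(G'')$ (with multi-edges) in which every $3$-connected component is in normal form and adjacent components at each separation pair are separated by a separation edge. By the Normal Form Theorem together with the discussion of separation edges, every pair of crossing edges then sits in its own quadrilateral face $f=(a,b,c,d)$, arising either from an augmented X-configuration as an inner face or from a W-configuration as the relative outer face immediately left of a separation edge; crucially, $f$ contains no other vertex. Extracting all crossing pairs yields a planar embedded skeleton $\sigma(\mathcal{E}(G''))$, whose components remain $3$-connected, so PLANAR-VISIBILITY applies to each component verbatim with the separation pair $\{u,v\}$ supplying $s$ and $t$.

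Next I would establish correctness of the reinsertion. PLANAR-VISIBILITY draws the skeleton, and by Lemma~\ref{proofCrossingInsertion} CROSSING-INSERTION places each extracted pair $(a,c),(b,d)$ back inside its own face $f$ with exactly one vertex-edge crossing, routing both edge-segments strictly between the bounding edge-segments of $f$ for each face type (left-wing, right-wing, diamond). The separation edges $e_1,\dots,e_k$ sandwich the components $C_0,\dots,C_{k-1}$ at a separation pair $\{u,v\}$ between $\beta(u)$ and $\beta(v)$ and order them left-to-right via the $st$- and $s^*t^*$-numberings, so the reinserted crossings of distinct components cannot interfere. After hiding the augmentation and separation edges one recovers a drawing of the original $G$ in which every edge is a vertical segment and every crossing is a right-angle vertex-edge crossing.

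The key combinatorial step, and the part I expect to be the main obstacle, is ensuring that the local choices of CROSSING-INSERTION combine into a \emph{global} drawing in which no vertex-segment is crossed twice. Within each face $f=(a,b,c,d)$ there are exactly two candidate inner vertices, $b$ and $d$, and the algorithm must select one crossed vertex per face so that each vertex is chosen at most once overall. I would model this as a bipartite graph between the set $F$ of crossing-faces and the set $I$ of candidate inner vertices and seek a matching saturating $F$. The $st$-numbering guarantees that each vertex serves as a middle (inner) vertex of at most two faces, while each face contributes exactly two inner vertices, so the bipartite graph has maximum degree two on both sides. Counting incidences, $2|F'| \le 2|N(F')|$ for every $F' \subseteq F$, whence Hall's condition $|N(F')| \ge |F'|$ holds and a saturating matching exists. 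Since all degrees are at most two, the graph decomposes into paths and even cycles and the matching is computable in linear time.

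Finally I would bound area and time. PLANAR-VISIBILITY yields a grid of size at most $(2n-5)\times(n-1)$, and the separation edges do not raise the width bound because each $e_i$ is charged against a missing edge from $C_{i-1}$ to $C_i$. CROSSING-INSERTION only displaces edge-segments by the fractional offsets $0.25,0.5,0.75$ and extends inner vertex-segments by at most $0.5$; all of these become integral once the $x$-coordinates are scaled by $4$, producing a final width of at most $4(2n-5)=8n-20$ while the height remains $n-1$. Every stage, namely the planar maximal augmentation, the $3$-connected decomposition, the normal-form transformation, the planarization, PLANAR-VISIBILITY, the matching, and CROSSING-INSERTION, runs in linear time, so the entire algorithm does as well, giving the claimed $(8n-20)\times(n-1)$ bound. \qed
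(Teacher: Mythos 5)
Your proposal is correct and follows essentially the same route as the paper's own proof: the normal-form/separation-edge transformation giving each crossing pair its own quadrilateral face, PLANAR-VISIBILITY on the planarized skeleton, Lemma~\ref{proofCrossingInsertion} for reinsertion, the Hall's-theorem matching between crossing-faces and inner vertices to prevent double vertex crossings, and the scale-by-4 argument for the $(8n-20)\times(n-1)$ bound. Your explicit incidence count $2|F'|\le 2|N(F')|$ for Hall's condition is in fact slightly more detailed than the paper's statement of it.
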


\begin{proof}
First consider the case where the graph $G$ is 3-connected. Its
embedding is transformed into normal form with all crossings as
augmented X-configurations with the exception of at most one
crossing in the outer face. Now each crossing of a pair of edges has
its own face, where a crossing in the outer face is assigned to the
face to the left of the inserted separation edge, and each such face
is a quadrangle. This property also holds for 2-connected graphs by
the separation edges between 3-connected components. Hence, the
planar graph  after the extraction of all pairs of crossing edges
can be drawn by PLANAR-VISIBILITY, and the extracted edges can be
reinserted by CROSSING-INSERTION. This induces the crossing of a
single vertex-edge pair for each pair of crossing edges in $f$, as
shown in Lemma \ref{proofCrossingInsertion}, such that each edge is
crossed at most once.

Multiple vertex crossings are excluded by a maximum matching between
the set of faces $F$ with a crossing and the set of
 inner vertices $I$ associated with the faces of $F$.
By the st-numbering each vertex  $v$ is an inner vertex  of at most
two faces, one to the left and one to the right. $v$ can be the top
or bottom vertex of  other faces. Hence, $v$ is assigned to at most
two faces of $F$, and each $f \in F$ has two inner vertices, as can
be seen from the left-wing, right-wing or diamond shape. The maximum
bipartite matching problem over $F$ and $I$ has a solution by Hall's
marriage theorem \cite{h-matching-35}, since for every subset $F'
\subseteq F$ the number of inner vertices $|I'|$ of the faces from
$F'$ is greater or equal to $|F'|$.

In this particular case, a maximum matching can be computed in
linear time by first matching all inner vertices of degree one, and
then matching the remaining faces using at most one alternation.
Since the remaining faces and inner vertices all have degree two,
the bipartite graph decomposes into disjoint alternating cycles.

PLANAR-VISIBILITY computes grid points for the segments and uses an
area of at most $(2n-5) \times (n-1)$ including the separation
edges. The number of faces of the augmented graph $G''$ is bounded
from above by $2n-4$, since for each separation edge there is a
missing edge between the adjacent 3-connected components.
CROSSING-INSERTION does not increase the area, but needs a scaling
of the $x$-coordinates by four, which results in an area of at most
$(8n-20) \times (n-1) $.

All steps take linear time. Steps 1-4, 7, 11 and 12 are done on
planar graphs. The linear running time of step 6 is from
\cite{abk-sld3c-13} and of step 8 from
\cite{dett-gdavg-99,rt-rplbopg-86, TT-vrpg-86}. Step 10 takes
$\mathcal{O}(1)$ time per crossing, and there are at most $n-2$
crossings, and step 5 is a single action. Finally, step 9 is shown
above. \qed

\end{proof}

\begin{corollary} \label{subclass}
Every 1-planar graph is a  1-visibility graph.
\end{corollary}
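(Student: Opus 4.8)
The plan is to read the corollary off the theorem just proved, the only work being to supply the input it expects. A $1$-planar graph $G$ is by definition one that admits a drawing in which every edge is crossed at most once, and fixing such a drawing yields a $1$-planar embedding $\mathcal{E}(G)$. This embedding is exactly the witness for $1$-planarity consumed by the 1-VISIBILITY algorithm, so for $2$-connected $G$ I would simply apply the preceding theorem to the pair $(G,\mathcal{E}(G))$ and obtain a $1$-visibility representation; hence $G$ is a $1$-visibility graph and nothing more is needed.

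For graphs that are not $2$-connected I would invoke the convention fixed in the Preliminaries (following \cite{TT-vrpg-86}): decompose $G$ at its cut vertices into blocks, noting that every block is either a $2$-connected $1$-planar graph (with the embedding inherited from $\mathcal{E}(G)$) or a single edge, and therefore has a $1$-visibility representation by the theorem. I would then lay the block drawings out in disjoint vertical bands and let the horizontal vertex-segment of each cut vertex span all the bands of the blocks that contain it. Since the assembly adds no edge-segment and no vertical line of one band reaches another, it creates no new vertex-edge incidence, and because each block drawing is individually a $1$-visibility representation the subgraph property is respected.

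The step I expect to be the main obstacle is the clause of the definition requiring that each vertex-segment be crossed by at most one edge-segment: a cut vertex $v$ lies in several blocks, and if $v$ were the crossed inner vertex in two of them its single merged segment would be crossed twice. I would remove this difficulty before assembling, by exploiting the freedom in the $st$-numbering of each block. Root the block-cut tree and, in each block, choose the $st$-numbering so that the cut vertex joining the block to its parent is the source $s$ (or, symmetrically, the sink $t$). Since CROSSING-INSERTION only ever crosses a middle vertex $b$ or $d$ of a quadrilateral face, and the extreme vertices $s,t$ are never such middle vertices (their $st$-numbers are globally minimal and maximal, so they cannot occur strictly between two others of a crossing face), the cut vertex is never crossed in any of its child blocks; it can be crossed only in the single block lying on its root side, hence at most once in the whole drawing. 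With this choice the per-block matchings of the theorem already yield a globally valid assignment, CROSSING-INSERTION reinserts each crossing pair with a single vertex-edge crossing confined to its own face by Lemma \ref{proofCrossingInsertion}, and the assembled picture is a $1$-visibility representation of $G$, which proves the corollary.
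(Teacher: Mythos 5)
Your proposal is correct, and at the top level it follows the same route as the paper: the corollary is simply read off the main theorem, whose input is exactly a 1-planar embedding as a witness for 1-planarity. Where you genuinely depart from the paper is in the non-2-connected case. The paper disposes of it by the convention fixed in its Preliminaries---components ``are treated separately, and are placed next to each other as in \cite{TT-vrpg-86}'', together with the remark that articulation points do not matter for weak visibility---and never revisits the issue. You are right that this is not automatic for 1-visibility: unlike planar weak visibility, merging the segments of a cut vertex across its blocks can violate the clause that each vertex-segment is crossed by at most one edge-segment, and your fix---root the block-cut tree and force the parent-side cut vertex to be $s$ (or $t$) in each block's st-numbering, so that it can never be an inner vertex $b$ or $d$ of a wing or diamond and hence is never crossed in a child block---is sound; it in fact mirrors the paper's own treatment of separation pairs, whose members appear as the extreme vertices $a,d$ of a left-wing face and are therefore never the crossed inner vertices. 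The one loose end in your write-up is the geometric assembly: the assertion that the merged cut-vertex segment ``creates no new vertex-edge incidence'' requires an argument that no edge-segment in an intermediate band reaches down to that segment's $y$-coordinate; this does follow from the very same source choice (with $v=s$, every block and its subtree hang weakly above $v$'s segment in a pre-order band layout), but you should state it rather than assert it. What your route buys is a genuine proof of the corollary in full generality; what the paper's route buys is brevity, at the cost of delegating the 1-connected case to a citation that, strictly speaking, covers only planar visibility.
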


\section{Density}

It is easily seen that 1-visibility graphs of size $n$ have at most
$4n-8$ edges, since there are at most $3n-6$ planar edges and at
most $n-2$ edges which cross a vertex. This is exactly the upper
bound of the density of 1-planar graphs.

\begin{corollary}
A 1-visibility graph of size $n$ has at most $4n-8$ edges.
\end{corollary}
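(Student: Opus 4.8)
The plan is to fix a 1-visibility representation of $G$ and partition its edges into two classes according to the behavior of their edge-segments. Call an edge a \emph{crossing edge} if its edge-segment crosses a vertex-segment, and a \emph{planar edge} if its edge-segment crosses no vertex-segment. I would then bound the two classes separately and add the two bounds.

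First I would argue that the planar edges, together with all vertex-segments, already constitute an ordinary (weak) visibility representation. By the definition of a 1-visibility representation the edge-segments are vertical and never overlap in their interior; since two vertical segments at distinct $x$-coordinates cannot cross and two at the same $x$-coordinate would overlap, no two edge-segments cross one another. Hence, after deleting exactly the edge-segments that cross a vertex-segment, what remains is a crossing-free visibility drawing of the subgraph spanned by the planar edges. Because weak visibility is equivalent to planarity, this subgraph is planar and therefore contributes at most $3n-6$ edges.

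Next I would bound the crossing edges. Each crossing edge-segment crosses exactly one vertex-segment, and by the 1-visibility condition each vertex-segment is crossed by at most one edge-segment; thus the assignment sending a crossing edge to the vertex it crosses is injective, so the number of crossing edges is at most the number of \emph{crossed} vertices. The key geometric observation is that the two extreme vertex-segments, the topmost and the bottommost in the drawing, cannot be crossed at all: for an edge-segment to cross a vertex-segment $w$ it must run from a vertex below $w$ to a vertex above $w$, which is impossible when $w$ has no vertex-segment above it (respectively below it). Hence at most $n-2$ vertices can be crossed, and so there are at most $n-2$ crossing edges.

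Summing the two bounds gives $(3n-6)+(n-2)=4n-8$. I expect the only real subtlety to lie in the two structural claims of the middle steps, namely that deleting the crossing edge-segments genuinely leaves a valid planar visibility representation (so that the equivalence with planarity may be invoked) and that the two extreme vertex-segments are necessarily uncrossable. Both follow immediately from the geometry of horizontal vertex-segments and vertical edge-segments, so neither should present a genuine obstacle; the whole argument is then a clean bookkeeping of the $3n-6$ planar edges against the $n-2$ crossable vertices.
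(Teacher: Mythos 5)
Your proposal is correct and follows essentially the same route as the paper, which bounds the edge set by at most $3n-6$ planar edges (those whose segments cross no vertex-segment) plus at most $n-2$ edges crossing a vertex, the two outermost vertex-segments being uncrossable. You merely spell out the details (injectivity of the edge-to-crossed-vertex assignment and the validity of the residual weak visibility representation) that the paper leaves implicit.
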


From Corollary \ref{subclass} we obtain a new and simple proof for
the maximal density of 1-planar graphs, which was proved before in
\cite{bhw-bs-83,pt-gdfce-97,fm-s1pg-07}.

\begin{corollary}
A 1-planar graph  of size $n$ has at most $4n-8$ edges.
\end{corollary}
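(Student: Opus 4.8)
The plan is to derive the bound as an immediate consequence of the two preceding corollaries, so that no fresh combinatorial argument is needed. By Corollary~\ref{subclass} every $1$-planar graph $G$ on $n$ vertices admits a $1$-visibility representation, hence $G$ is itself a $1$-visibility graph. By the density corollary for $1$-visibility graphs, any such graph has at most $4n-8$ edges. Composing these two facts yields $|E(G)| \le 4n-8$, which is the claim.

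The substance that makes this composition work is the counting behind the $1$-visibility density bound, which I would recall explicitly to keep the argument self-contained. In any $1$-visibility representation the uncrossed edge-segments together with the vertex-segments form an ordinary planar visibility drawing of a planar subgraph, so the uncrossed edges number at most $3n-6$. The remaining edges are precisely those whose edge-segment crosses a vertex-segment; since each vertex-segment is crossed by at most one edge-segment, these crossing edges are in bijection with the crossed vertices. The two extreme vertices $s$ and $t$ (the bottom and top of the drawing produced by PLANAR-VISIBILITY) span the full width of the representation and cannot be crossed, so at most $n-2$ vertices are crossed. Summing $(3n-6)+(n-2)=4n-8$ gives the bound, and transferring it through the inclusion completes the proof.

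The main obstacle does not reside in this corollary at all, but in the two results on which it rests, and all of that work has already been done. Corollary~\ref{subclass} depends on the entire construction of the $1$-VISIBILITY algorithm — the planar-maximal normal-form augmentation, the insertion of separation edges between $3$-connected components, PLANAR-VISIBILITY applied to the planarization, and CROSSING-INSERTION together with the Hall-matching argument that guarantees at most one edge crosses each vertex-segment. The $1$-visibility density corollary in turn relies on the clean decomposition of a $1$-visibility drawing into its planar part and its one-crossing-per-vertex part. Once those ingredients are in place, the present statement is a one-line composition; this is exactly why it furnishes the advertised new and simple proof of the classical $4n-8$ density bound for $1$-planar graphs, recovering the earlier results of \cite{bhw-bs-83,pt-gdfce-97,fm-s1pg-07} without a direct edge-counting on crossing configurations.
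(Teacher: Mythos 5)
Your proposal is correct and is exactly the paper's intended argument: the paper derives this corollary by composing Corollary~\ref{subclass} (every 1-planar graph is 1-visible, via the 1-VISIBILITY algorithm) with the preceding corollary that 1-visibility graphs have at most $4n-8$ edges, the latter resting on the same count of at most $3n-6$ uncrossed (planar) edges plus at most $n-2$ vertex-crossing edges, the topmost and bottommost vertex-segments being uncrossable. Your recollection of that counting argument matches the paper's, so there is nothing to add.
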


Surprisingly, there are 1-visible graphs which are not 1-planar,
even if they have the maximum of $4n-8$ edges.

\begin{theorem}
For every $n \geq 7$ there are graphs with $4n-8$ edges which are
1-visible and not 1-planar.
\end{theorem}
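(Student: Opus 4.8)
The plan is to reduce the statement to a single containment fact and then handle all sizes by a density-preserving vertex insertion. I would first record that $1$-planarity is closed under taking subgraphs: deleting vertices or edges from a $1$-planar drawing leaves a $1$-planar drawing. Consequently, \emph{any} graph that contains a non-$1$-planar subgraph is itself non-$1$-planar. Since $K_7-e$ is known to be non-$1$-planar \cite{bhw-1og-84,s-rm1pg-10}, it therefore suffices to construct, for every $n\ge 7$, a $1$-visible graph $G_n$ with exactly $4n-8$ edges that contains $K_7-e$ as a subgraph; the non-$1$-planarity of $G_n$ is then immediate, while $1$-visibility must be exhibited.

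For the base case $n=7$ I would take $G_7=K_7-e$ itself, which has $\binom{7}{2}-1 = 20 = 4\cdot 7-8$ edges, and give an explicit $1$-visibility representation. Following the density decomposition used earlier in this section, one realizes $K_7-e$ as a planar skeleton of $3\cdot 7-6=15$ edges together with $n-2=5$ crossing edges, each crossing exactly one of the five middle vertex-segments while the two extreme segments remain uncrossed. The two non-adjacent vertices of $K_7-e$ are convenient choices for the extremes $s$ and $t$, so that no edge $(s,t)$ is required and the two uncrossed segments sit at top and bottom.

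For $n>7$ I would argue by induction, at each step adding one vertex and four edges so that the density $4n-8$ is preserved and the $K_7-e$ subgraph is retained. Geometrically this is a single local ``stamp'' on the outer boundary of the current $1$-visibility representation: place a new horizontal vertex-segment $w$ over three consecutive boundary vertices $a,b,c$ (with $ab,bc$ already present), draw three planar edge-segments $wa,wb,wc$, and insert one further diagonal edge $ac$ that crosses exactly one previously uncrossed vertex-segment — precisely one application of the CROSSING-INSERTION step of Algorithm~\ref{alg:crossing-insertion}. This contributes $+3$ planar and $+1$ crossing edge, i.e.\ $+4$ in total, so $G_{n+1}$ has $4(n+1)-8$ edges, and $G_{n+1}\supseteq G_n\supseteq K_7-e$.

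The main obstacle is bookkeeping rather than a single deep idea: one must maintain, as an invariant of the induction, that the outer boundary always offers a legal attachment site — three consecutive boundary vertices with $ac$ absent, together with an uncrossed vertex to absorb the new crossing — so that no vertex is crossed twice, no edge crosses two vertices, and the extreme segments stay uncrossed. Verifying this invariant, together with the routine segment-extension and $x$-coordinate placement already handled by CROSSING-INSERTION, and confirming the explicit base representation of $K_7-e$, are the points that require care; everything else follows from the density count and the subgraph-closure of $1$-planarity.
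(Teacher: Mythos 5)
Your overall strategy --- base graph $K_7$-$e$, one new vertex plus four new edges per step, and subgraph-closure of 1-planarity --- is the same as the paper's, but your inductive step contains a genuine flaw: the invariant you defer as ``bookkeeping'' is in fact false for the stamp you describe. In your stamp the crossing edge is the diagonal $(a,c)$ between two \emph{old} vertices, crossing the old middle vertex $b$. Hence every stamp consumes one previously uncrossed old vertex, and the new vertex $w$ it creates can \emph{never} later serve as a crossed middle vertex: its only boundary neighbours are $a$ and $c$, and the diagonal $(a,c)$ now exists, so the triple $(a,w,c)$ is permanently illegal. Moreover, no later stamp can change $w$'s neighbourhood, since that would require a stamp centered at $a$ or $c$, both of which are already crossed. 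Now look at your own base case: in your representation of $K_7$-$e$ all five middle segments are crossed and only the two extremes are uncrossed. So you can stamp at most twice (once over each extreme); after that every uncrossed vertex is one of your new vertices $w_j$, each blocked by its own diagonal, and the construction stalls at $n=9$. The induction does not go through.

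The repair is exactly the point where the paper's construction differs from yours: make the crossing edge incident to the \emph{new} vertex rather than a diagonal between old ones. Connect each new vertex $v_i$ to vertex $3$ on the far left, to vertex $1$ on the far right, and to $v_{i-1}$ and $v_{i-2}$ on top, routing $(v_i,v_{i-2})$ so that it crosses the segment of $v_{i-1}$ (Fig.~\ref{fig:K7-e}). Then each step consumes only the vertex created in the previous step (which is thereby crossed exactly once) and creates a fresh uncrossed vertex, so the supply of attachment sites never runs out; vertices $1$ and $3$ receive only planar attachments at the ends of their long segments. This still adds exactly four edges per new vertex, preserving the count $4n-8$, and the rest of your argument --- the $15+5$ edge count of the base case and non-1-planarity via the subgraph property --- is sound and matches the paper.
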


\begin{proof}
There are no 1-planar graphs with $n=7$ (or $n=9$) vertices and
$4n-8$ edges \cite{bhw-1og-84,s-rm1pg-10}, however, the complete
graph on $7$ vertices without one edge $K_7$-$e$ is 1-visible, as
shown in Fig. \ref{fig:K7-e}.

For $n \geq 8$ construct the graph  $G_n$ from $K_7$-$e$ and add
$n-7$ vertices and connect each such $v_i$ with vertex $3$ on the
left and with vertex $1$ on the right side and with $v_{i-1}$ and
$v_{i-2}$ on top, where the edge $(v_i, v_{i-2})$ crosses $v_{i-1}$,
as illustrated in Fig. \ref{fig:K7-e}.

Since the 1-planar graphs have the subgraph property  $G_n$ is not
1-planar.
 \qed
\end{proof}

\begin{figure}
   \begin{center}
  \includegraphics[scale=0.45]{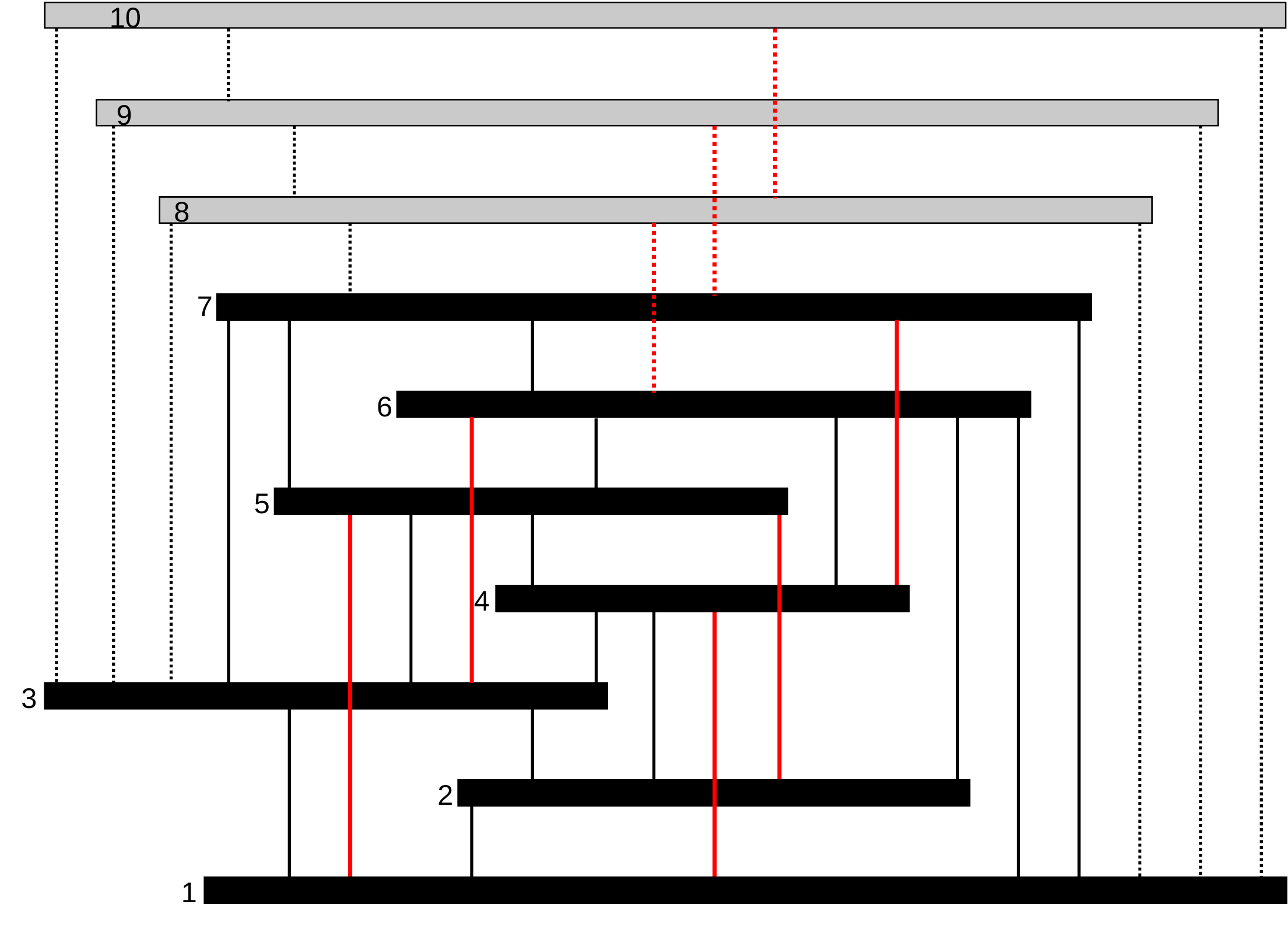}
     \caption{The $K_7$-$e$ graph with the vertices $\{1,\ldots, 7 \}$ is 1-visible and not 1-planar.
     The edge $(2,7)$ is missing. The graph can be expanded by new vertices $8,\ldots$ which add four edges
     to lower vertices.}
     \label{fig:K7-e}
   \end{center}
\end{figure}

1-planar (1-visible) graphs with   $4n-8$ edges  are  called
\emph{optimal} \cite{bhw-1og-84,s-rm1pg-10}. Note that there are
optimal 1-planar graphs only for $n=8$ and $n \geq 10$
\cite{bhw-1og-84,s-rm1pg-10}, whereas there are optimal 1-visible
graphs for every $n \geq 7$. More 1-visible and not 1-planar graphs
can be constructed using the schema of Fig. \ref{fig:wall}, where
the outer frame represents a subgraph with a unique 1-planar
embedding as in \cite{km-mo1ih-13} and the edge $(a,c)$ crosses
vertex $b$ and would cross at least two edges in every 1-planar
drawing.

\begin{figure}
   \begin{center}
   \includegraphics[scale=0.45]{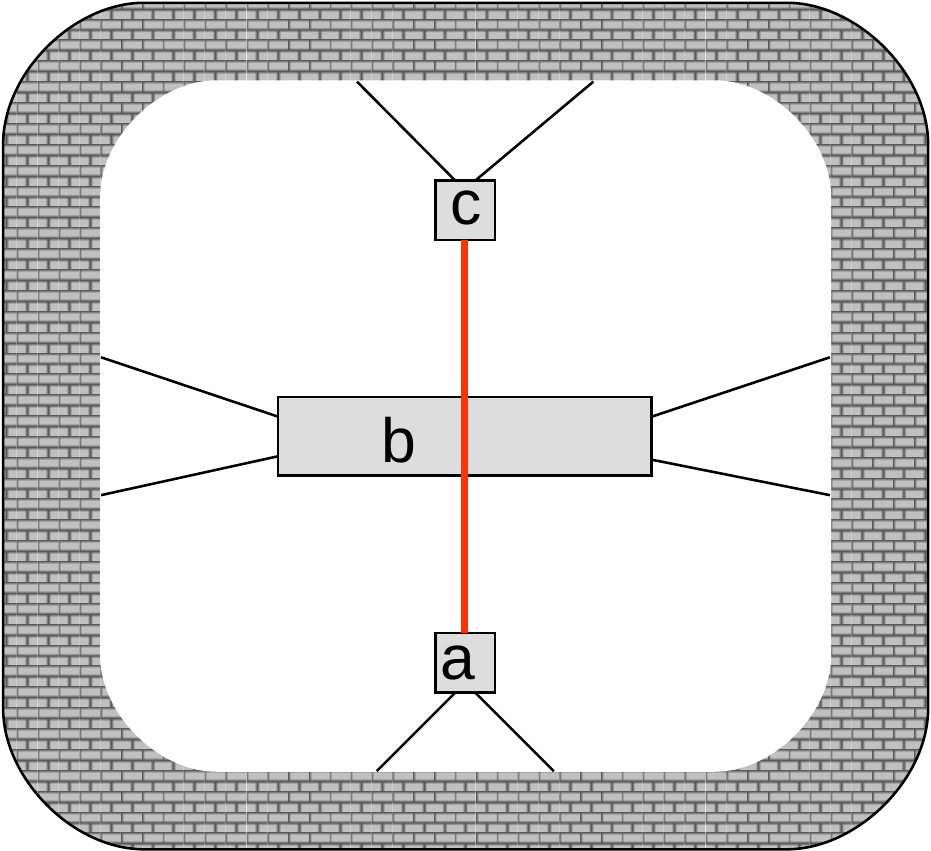}
     \caption{Schema  for non 1-planar 1-visibility graphs}
     \label{fig:wall}
   \end{center}
\end{figure}

\section{Conclusion and Perspectives}

We introduced 1-visibility drawings as a novel crossing style. If
this is restricted to a single crossing per object, then edge-vertex
crossings properly extend the edge-edge crossings.

The new crossing style raises several questions.
\begin{enumerate}
  \item Is the recognition problem for 1-visible graphs \NP \, complete?
  Recall that recognizing 1-planar graphs is \NP \, complete
  \cite{km-mo1ih-13}, even under restrictions
  \cite{abgr-o1pgr-12,km-mo1ih-13}. How hard is it to test
  whether a 1-visible graph is 1-planar?
  \item When does a graph have a unique 1-visibility representation?
  We're looking for a parallel to Whitney's theorem on unique
  embeddings of 3-connected planar graphs on the sphere.
  \item We studied the weak version for 1-visibility. What are the
  restrictions imposed by the $\epsilon$ and strong versions of
  1-visibility with an if and only if relation between
  1-visibility and edges?
  \item Study a generalization of 1-visibility that is based on Biedl's flat visibility
  representation and permits a single  crossing between
  horizontal bars for vertices or edges and vertical lines for
  edges? How close is this approach to rectangle visibility
  graphs \cite{dh-rvrbg-94,hsv-orstt-95,hsv-rstg-99}?
  \item Consider $k$-visibility for $k \geq 1$.
\end{enumerate}

\section{Acknowledgement}
I wish to thank K. Hanauer for providing a 1-visibility drawing of
the  $K_7$-$e$ graph.


\end{document}